\newcommand{\bm}[1]{\mbox{\boldmath~$ #1~$}}
\newcommand{\ce}{{\Cal E}}
\newtheorem{theorem}{Theorem}[section]
\newtheorem{lemma}[theorem]{Lemma}
\newtheorem{proposition}[theorem]{Proposition}
\theoremstyle{definition}
\newtheorem{definition}[theorem]{Definition}
\theoremstyle{remark}
\newtheorem{remark}[theorem]{Remark}
\newcommand{\be}{\begin{equation}}
\newcommand{\ee}{\end{equation}}
\renewcommand{\c}{\boldsymbol{c}}
\newcommand{\p}{\boldsymbol{p}}
\newcommand{\bw}{\Wedge}
\newcommand{\W}{\mathcal{W}}
\newcommand{\om}{\omega}
\newcommand{\cB}{{\mathcal B}}
\newcommand{\cG}{{\mathcal G}}
\newcommand{\cU}{\mathcal{U}}
\newcommand{\si}{\sigma}
\newcommand{\ba}{\begin{array}}
\newcommand{\ea}{\end{array}}
\newcommand{\beq}{\begin{eqnarray}}
\newcommand{\eeq}{\end{eqnarray}}
\newtheorem{lm}{lemma}
\newtheorem{thee}{theorem}
\newtheorem{proo}{proposition}
\newtheorem{co}{corollary}
\newtheorem{rem}{remark}
\newtheorem{deff}{definition}
\newcommand{\bd}{\begin{deff}}
\newcommand{\ed}{\end{deff}}
\newcommand{\bl}{\begin{lm}}
\newcommand{\el}{\end{lm}}
\newcommand{\bp}{\begin{proo}}
\newcommand{\ep}{\end{proo}}
\newcommand{\bt}{\begin{thee}}
\newcommand{\et}{\end{thee}}
\newcommand{\bc}{\begin{co}}
\newcommand{\ec}{\end{co}}
\newcommand{\brm}{\begin{rem}}
\newcommand{\erm}{\end{rem}}
\def\frak{\mathfrak}
\def\Cal{\mathcal}
\newcommand{\g}{\mathfrak{g}}
\newcommand{\cL}{{\Cal L}}
\newcommand{\newc}{\newcommand}
\let\ccdot\cdot
\def\cdot{\hbox to 2.5pt{\hss$\ccdot$\hss}}
\newc{\aR}{\mbox{\boldmath{$ R$}}}
\newc{\aS}{\mbox{\boldmath{$ S$}}}
\newc{\aT}{\mbox{\boldmath{$ T$}}}
\newc{\aW}{\mbox{\boldmath{$ W$}}}
\newc{\aD}{\mbox{\boldmath{$ D$}}\hspace{-.2mm}}
\newc{\aK}{\mbox{\boldmath{$ K$}}}
\newc{\aL}{\mbox{\boldmath{$ L$}}}
\newcommand{\ct}{{\Cal T}}
\newcommand{\nd}{\nabla}
\newcommand{\cT}{{\mathcal T}}
\newcommand{\cF}{{\Cal F}}
\newcommand{\cV}{{\Cal V}}
\newcommand{\nn}[1]{(\ref{#1})}
\def\cD{{\Cal D}}
\newc{\obstrn}[2]{B^{#1}_{#2}}
\newcommand{\flplus} 
{
\hspace{0.1cm}
\begin{tikzpicture}[baseline=-0.582ex]
    \draw [line width=0.24pt](-0.1129, 0) -- (0.1129, 0) -- (0, 0) -- (0, 0.1129) -- (0, -0.1129) arc (270:90:0.1129) -- (0, 0);
\end{tikzpicture}
\hspace{0.1cm}
}
\newcommand{\frplus} 
{
\hspace{0.1cm}
\begin{tikzpicture}[baseline=-0.582ex]
    \draw [line width=0.24pt](-0.1129, 0) -- (0.1129, 0) -- (0, 0) -- (0, 0.1129) -- (0, -0.1129) arc (-90:90:0.1129) -- (0, 0);
\end{tikzpicture}
\hspace{0.1cm}
}
\newc{\tensor}[1]{#1}
\newc{\Mvariable}[1]{\mbox{#1}}
\newc{\down}[1]{{}_{#1}}
\newc{\up}[1]{{}^{#1}}
\newc{\JulyStrut}{\rule{0mm}{6mm}}
\newc{\midtenPan}{\mbox{\sf S}}
\newc{\midten}{\mbox{\sf T}}
\newc{\midtenEi}{\mbox{\sf U}}
\newc{\ATen}{\mbox{\sf E}}
\newc{\BTen}{\mbox{\sf F}}
\newc{\CTen}{\mbox{\sf G}}
\def\sideremark#1{\ifvmode\leavevmode\fi\vadjust{\vbox to0pt{\vss
 \hbox to 0pt{\hskip\hsize\hskip1em
 \vbox{\hsize3cm\tiny\raggedright\pretolerance10000
 \noindent #1\hfill}\hss}\vbox to8pt{\vfil}\vss}}}%
\numberwithin{equation}{section}
\newcommand{\B}{\mathcal B}
\newcommand{\faketop}{{\phantom{\scalebox{.6}{\textrm{top}}}\!\!\!}}
\renewcommand{\P}{{\mbox{\sf P}}}                   
\renewcommand{\W}{{\mbox{\sf W}}}                   
\newcommand{\bes}{\begin{equation*}}
\newcommand{\ees}{\end{equation*}}
\newcommand{\Wedge}{\scalebox{1.25}{$\wedge$}\!}
\newcommand{\partialf}{\boldsymbol{\partial}^*}
\newcommand{\deltaf}{\boldsymbol{\delta}}
\newcommand{\deltafs}{\scalebox{.7}{$\boldsymbol {\delta}$}}
\newcommand{\df}{\boldsymbol{d}}
\newcommand{\dfs}{\scalebox{.7}{$\boldsymbol{d}$}}
\newcommand{\Ms}{\scalebox{.7}{$\boldsymbol{ M}$}}
\newcommand{\Vf}{{\!\bm V\!\!}}
\begin{document}

\title{Metric projective geometry, BGG detour complexes\\[.5mm] and partially massless gauge theories}
\author{A. Rod Gover${}^\diamondsuit$, Emanuele Latini${}^\clubsuit$ \& Andrew Waldron${}^\spadesuit$}

\address{${}^\diamondsuit$Department of Mathematics\\
  The University of Auckland\\
  Private Bag 92019\\
  Auckland 1\\
  New Zealand,  and 
  Mathematical Sciences Institute, Australian National University, ACT
  0200, Australia} \email{gover@math.auckland.ac.nz}
  
  \address{${}^{\clubsuit}$Institut f{\"u}r Mathematik, Universit{\"a}t Z{\"u}rich-Irchel, Winterthurerstrasse 190, CH-8057 Z{\"u}rich, 
  Switzerland, and INFN, Laboratori Nazionali di Frascati, CP 13,
  I-00044 Frascati, Italy} \email{emanuele.latini@math.uzh.ch}
  
  \address{${}^{\spadesuit}$Department of Mathematics\\
  University of California\\
  Davis, CA95616, USA} \email{wally@math.ucdavis.edu}

\vspace{10pt}

\vspace{10pt}

\renewcommand{\arraystretch}{1}


\begin{abstract} 

A projective geometry is an equivalence class of torsion free
connections sharing the same unparametrised geodesics; this is a basic
structure for understanding physical systems.  Metric projective
geometry is concerned with the interaction of projective and
pseudo-Riemannian geometry. 
We show that the BGG machinery
of projective geometry combines with structures known as Yang-Mills
detour complexes to produce a general tool for generating invariant
pseudo-Riemannian gauge theories.  This produces (detour) complexes of
differential operators corresponding to gauge invariances and
dynamics.  We show, as an
application, that curved versions of these sequences give geometric characterizations of the obstructions to
propagation of higher spins in Einstein spaces.  Further, we show that
projective BGG detour complexes generate both gauge invariances and
gauge invariant constraint systems for partially massless models: the
input for this machinery is a projectively invariant gauge operator
corresponding to the first operator of a certain BGG sequence. We also
connect this technology to the log-radial reduction method and extend
the latter to Einstein backgrounds.

\vspace{2cm}
\noindent
{\sf \tiny Keywords: Projective geometry, BGG sequences, gauge theories, higher spin theories, detour complex.}

\end{abstract}

\maketitle

\pagestyle{myheadings} \markboth{Gover, Latini \& Waldron}{Metric projective geometry}

\newpage

\tableofcontents

\section{Introduction}\label{intro}

A fundamental problem in physics, mathematics, and their interface is
that of finding the right way to describe and treat the natural
differential equations describing fields.  Naturality here refers to
equations determined by the underlying geometry, so this is in essence
a geometric problem. For the case of particle theories in space-time
physics it appears, on the surface, that
pseudo-Riemannian geometry should be the central structure. However
pseudo-Riemannian invariance is a rather weak condition, in the sense
that by far more equations are invariant in this sense than are
interesting or important.  For field equations of motion, other
principles can be brought to bear, and in particular the requirement
that theories exhibit suitable gauge invariance, and corresponding
integrability conditions, plays a critical {\it r\^ole} in determining systems
with, for example, the correct propagating degrees of freedom (DoF). It is
then reasonable to ask if there is a more fundamental geometric structure, underlying
 pseudo-Riemmannian geometry, that includes the entire
picture. The payoff for a positive answer can be significant. Apart
from developing theory for the unification and extension of gauge
theories, this can give insight into how to treat
fields at infinity and  thus understand decay, scattering, and possible
holographic features.

It has long been realised that, in dimension four, the equations
governing massless fields exhibit conformal invariance. This suggests
conformal geometry (a manifold $M$ and a conformal class of
metrics~${\c}$) as a central organizing principle for field
theories~\cite{Gover:2008sw,Gover:2008pt}.  However four dimensional
physics also requires fields that are not massless, and so one may consider
alternative basic structures, such as a projective class of
connections: Given an affine connection~$\nabla$, the collection of
its geodesics as unparametrised curves is the corresponding {\em
  projective structure} $\p$; or $\p$ may be viewed as the equivalence
class of torsion free connections sharing the same unparametrised
geodesics. To each pseudo-Riemannian metric, there is associated a unique
projective structure via the metric's Levi-connection (or its
geodesics). Then, {\em metric projective geometry} studies the
interaction between these geometries.  Building partly on earlier
works, {\it e.g.}~\cite{Mikes,Sinjukov}, there has been a recent surge
of interest in the links between metric and projective geometry with
powerful results obtained, see {\it e.g.}~\cite{KM,Matveev}.  Through
its projective structure a metric determines a projective Cartan
connection or, equivalently, tractor connection~\cite{BEG,C,T}. This
is a higher order geometric structure which encodes geometry, at each
point, not just in the tangent space but also in higher order Taylor
series data of the manifold. Furthermore it exposes deeper links
between metrics and projective geometry~\cite{EM}. In particular,
there  is a striking connection between projective geometry and Einstein
metrics \cite{A,CGM,GM}. The latter are generalised through suitable
Cartan holonomy reductions~\cite{ageom,CGHduke}. These new insights
have led to the development of a metric-projective analogue of
conformal compactification~\cite{Cap:2013dva}.

Geodesics are basic geometric structures; physically they encode 
how particles interact with background geometries.
Projective geometry is also intimately related to massless spin two dynamics: 
On constant curvature backgrounds, the differential complex controlling
deformations of Riemannian structure (metric fluctuations) is a projectively invariant Bernstein-Gelfand-Gelfand (BGG) complex--see Eastwood's interpretation~\cite{elasticity} of the Calabi complex~\cite{Calabi}. 

BGG complexes have their origins in representation
theory~\cite{BGG,Zuckermann}, but in geometry the dual structures that
go by the same name (or {\em BGG sequences} more generally) arise
naturally from a tractor connection twisting of the de Rham complex
and algebraic tools of
Kostant~\cite{Kostant,Baston,Calderbank-Diemer,CSS-BGG}.  These are
extremely powerful tools for the organisation and interpretation of
invariant differential operators but, as we shall explain in Section
~\ref{gaugeS}, are not the right objects for dynamical field theories.
In the setting of even dimensional conformal geometry, {\em detour
  complexes} were introduced in~\cite{BG1,BG2} as complexes which are
linked to the BGG sequences but which, importantly, involve weaker
integrability conditions. It was quickly realised that these fit into a
wider framework which  fits closely with variational principles.  In
particular, in \cite{GSS} it is shown that, for each linear vector
bundle connection, there is a corresponding differential (detour)
sequence that forms a complex if and only if the given connection satisfies
the Yang-Mills equations; for the vector bundle corresponding to the
adjoint representation of the gauge group the sequence governs second
variations of the Yang-Mills action. These {\em Yang-Mills detour
  complexes} can be linked to the first operators in BGG sequences
(and their adjoints) via differential splitting operators to yield
further complexes that may be called {\it BGG detour complexes}.

In this article, we show that bringing together the BGG machinery and
the Yang-Mills detour theory, in the setting of metric projective
geometry, produces a general tool for generating invariant
pseudo-Riemannian gauge theories. This yields the underlying 
geometric picture we were seeking.  In particular in Theorem~\ref{Bigtheorem}, for fields of any integral spin on constant
curvature backgrounds, we construct equations of motion that are
invariant  with respect to maximal depth ({\it i.e.},\ highest possible derivative order) gauge transformations. The
detour complexes involved also give the corresponding Bianchi
identities. In fact, more than the complex arises: in Theorem~\ref{contraintSys} we show that the detour machinery also produces the
gauge invariant constraint systems and corresponding relations between
the constraints. Thus one obtains a rather complete picture for what
are called, following~\cite{Deser:2001us}, (maximal depth) {\em partially
  massless fields}.

To illustrate  concretely by example  the link between partially massless (PM) theories and projective BGG sequences, consider 
the formula
$$
\delta \varphi_{ab}=\big(\nabla_a\nabla_b+\frac{\Lambda}{3}\, g_{ab}\big) \alpha\, .
$$ 
Physically the above is the gauge invariance of a PM spin two field.  From a geometry perspective, the terms in
brackets are a projectively invariant BGG operator, but specialised to
an Einstein scale. There have also been indications of the importance
of projective geometry for PM systems in the physics literature: It
was first observed in~\cite{Hallowell} that  PM models have as a
geometric origin massless models in a flat space of one higher
dimension. This was achieved by a log-radial reduction~\cite{Biswas}
which projectivised this flat ambient space.  This also suggests an
intimate link between projective geometry and PM models.  Further
evidence for this, especially in light of the conformal-projective link on Einstein manifolds~\cite{GM}, is that these models can also be described by conformally invariant equations coupled to a parallel, conformal, scale tractor~\cite{Gover:2008sw,Gover:2008pt,Grigoriev:2011gp}. Since PM models are our running example of a physical system whose underpinning is a projective structure, let us briefly review those models:
The first PM  theory was discovered by Deser and Nepomechie who were searching for a modification of the spin two equations of motion that supported lightcone propagation in conformally flat spacetimes~\cite{Deser:1983tm}. Subsequently Higuchi realized that this gives a unitarity bound on massive spin two excitations~\cite{Higuchi}.
Later still, it was understood that PM  theories existed for all spins and also Fermi fields~\cite{Deser:2001pe}. Their novel, higher derivative, gauge invariances implied they described the lightlike propagation~\cite{Deser:2001xr} of sets of helicity states that were intermediate between the usual massless and massive models~\cite{Deser:2001pe,Deser:2001us,Deser:2001wx,Deser:2003gw}.

In more detail, massless higher spin systems are described by second
derivative order, gauge invariant equations of motion. These gauge
symmetries guarantee that only physical DoF
propagate. Massive higher spin systems take a different route. Again,
their equations of motion are second derivative order, but are no
longer gauge invariant. Instead, a set of integrability conditions of
the equations of motion imply constraints that are required for  propagation
of only physical DoF. The PM system's
route is an intermediate one: Their second derivative field equations enjoy both
gauge invariances and integrability conditions implying {\it gauge
  invariant} constraints. As mentioned, the beauty of our BGG detour complex
approach to these systems is that it automatically produces a system
of gauge invariant equations, for a minimal field content, that
includes both equations of motion and constraints.

The BGG and detour apparatus is also linked to other approaches in the physics literature such as the higher spin unfolding  programme~\cite{Shaynkman:2004vu} whose $\sigma_-$ cohomology, see for example~\cite{Vasiliev:2009ck,Skvortsov:2009nv}, 
 is really the homology of the Kostant differential~\cite{Kostant}. Also, BRST machinery (which is intimately related to Lie algebra cohomology) applied to parabolic Lie superalgebras represented by differential operators acting on 
higher rank tensor bundles, has been used to construct detour operators for massless higher spin models and related systems describing supersymmetric black hole dynamics in~\cite{Cherney:2009mf,Cherney,Cherney:2010xh}.  

Despite its being a fundamental geometric structure, projective
geometry is still rarely utilized in physical settings, so we briefly
review its key ingredients, and those of metric
projective geometry, in Section~\ref{mProj}.  Our key tool for
handling projective geometries is the tractor calculus
of~\cite{BEG}. This is detailed in Section~\ref{tractors}. Detour
complexes and BGG sequences are powerful technologies, these are
described in generality in Section~\ref{dS}. There we also explain
their relationship to physical systems. Armed with all the above
machinery, we finally turn to explicit physical models in
Section~\ref{pmm}.  In Theorem~\ref{Bigtheorem}, using the BGG detour
complex apparatus, we establish the existence of a complex describing
the equations of motion and gauge invariances of a broad class of
models. Then in Theorem~\ref{contraintSys}, we show that the BGG
complex also encodes a gauge invariant system of constraints for those
models. In Section~\ref{s=3CC} we spell out the case of PM spin three on
constant curvature backgrounds. In Section~\ref{EBs} we study the
extension of our higher spin results to general Einstein
backgrounds. For spin two  there is no obstruction here, but already for
spin three, the BGG detour technology neatly characterizes an obstruction to
propagation in these spaces. There we also present a new complex coupling
spin three to a mixed symmetry field giving, at least, gauge invariant
dynamics in an Einstein background. Section~\ref{Action} discusses
action principles within the BGG formulation; here we focus on the
spin two  case. In Appendix~\ref{LRR}, we relate the log-radial reduction
technique (which is a popular method for studying higher spin systems
and their interactions) to projective tractor calculus.

\section{Metric projective differential geometry}\label{mProj}

At the level of underlying geometry we will exploit the 
interaction between metric and projective geometry. The discussion of
projective geometry here follows~\cite{{BEG,eastwood,GN}}
while~\cite{CGM,EM,GM} provide background theory for metric projective
theory.

\subsection{Projective Geometry}\label{proj}
Projective geometry is one of the simplest examples of a parabolic
geometry; it can be defined as a Cartan geometry modeled on the
homogeneous space~$Sl(n+1,\mathbb{R})/P$ where~$P$ is the parabolic
subgroup stabilizing a ray in~$\mathbb{R}^{n+1}$.  
A projective manifold is the structure $(M,\p)$ where $M$ is a smooth $n$-dimensional manifold and $\p$ is 
an equivalence class
of torsion-free affine connections, where ~$\widehat{\nabla}\sim \nabla$ if,
acting on any one-form~field $\omega$, or vector~field $v$, they are related by
\begin{equation}\label{cactus}
\widehat{\nabla}_a\omega_b=\nabla_a\omega_b-\Upsilon_a\omega_b-\Upsilon_b\omega_a
\ \Leftrightarrow\ 
\widehat{\nabla}_av^b=\nabla_av^b+\Upsilon_av^b+\delta_a^b\Upsilon_cv^c\, ,
\end{equation}
for some one-form~$\boldsymbol{\Upsilon}$. This definition is derived from the fact
that the relationship given in~\nn{cactus} is exactly the condition that
$\widehat{\nabla}$ and $ \nabla$ share the same geodesics
as unparametrised curves. This is a classical result; see~\cite{eastwood} for a modern treatment. Extending the above by linearity to a~$p$-form~$\omega_{bc\cdots d}$ we have,
$$
\widehat{\nabla}_a\omega_{bc\cdots d}=\nabla_a\omega_{bc\cdots d}-(p+1)\Upsilon_a\omega_{bc\cdots d}-(p+1)\Upsilon_{[a}\omega_{bc\cdots d]}\, ,
$$
and in particular, for a top form
$
\widehat{\nabla}^\faketop_a\omega^{\scalebox{.6}{\textrm{top}}}_{bc\cdots d}=\nabla_a^\faketop\omega^{\scalebox{.6}{\textrm{top}}}_{bc\cdots d}-(n+1)\Upsilon_a^\faketop\omega^{\scalebox{.6}{\textrm{top}}}_{bc\cdots d}
$.
Taking powers of the volume density bundle  gives the projective density bundle $$\mathcal{E}(w):=\big((\Wedge^nT^*M)^2\big)^{-w/2(n+1)}\, ;$$ 
and thus for a section~$\sigma$ of this we have:
$$
\widehat{\nabla}_a\sigma=\nabla_a\sigma+w\Upsilon_a\sigma\, .
$$ 

As a point of notation, in the following, given any vector bundle
$\cB$, we will write $\cB(w)$ as a shorthand for $\cB\otimes \ce(w)$,
and we say the vector bundle $\cB(w)$ (and any section thereof) has
{\em projective weight} $w$. Quite generally, we also use the same notation for a bundle and its section space.

Note that, at this juncture, there is no notion of a Riemannian
metric, nevertheless given a torsion-free affine connection  $\nabla$, its curvature tensor is given by
$$
[\nabla_a,\nabla_b]\omega_c=-R_{ab\,\,\,\,c}^{\,\,\,\,\,\,d}\omega_d\, ,
$$ 
and $R_{[ab}{}^d{}_{c]}=0$. Moreover, we can decompose the
curvature into its trace-free and trace pieces as~\cite{BEG}
\begin{equation}\label{dec1}
R_{ab}{}^d{}_c=W_{ab}{}^d{}_c+2\delta_{[a}^dQ^{\phantom{c}}_{b]c}-2Q^{\phantom{c}}_{[ab]}\delta^d_c\, ,
\end{equation}
where $W_{ab}{}^d{}_c$ and $Q_{ab}$  are the {\it projective Weyl} and {\it Schouten tensors}. The former obeys $W_{[ab}{}^d{}_{c]}=0=W_{ad}{}^d{}_c$ and is projectively invariant while the 
 projective Schouten tensor has no definite symmetry and transforms as
$$
\widehat{Q}_{ab}=Q_{ab}-\nabla_a\Upsilon_b+\Upsilon_b\Upsilon_a\, .
$$
It should not  be confused with its  conformal geometry counterpart denoted~$\P_{ab}$.
The curl of this, $\nabla_a Q_{bc}-\nabla_b Q_{ac}=: C_{abc}$,
defines the projective Cotton tensor.

\subsection{Scales} \label{scales}
Any $\nabla\in \p$ also gives a connection on any tensor bundle, and
in particular on the line bundle $(\Wedge^nT^*M)^2$. Conversely a
connection $\nabla \in \p$ is determined by a choice of connection on
$(\Wedge^nT^*M)^2$.  The skew part of the Schouten tensor, $Q_{[ab]}$,
is (up to a non-zero constant multiple) the curvature of $\nabla$ on
that line bundle.  As already used above, the line bundle
$(\Wedge^nT^*M)^2$ is
trivial. In fact it is also canonically oriented and the 
positive square root is the {\em volume density bundle}.  A nonvanishing section of
$(\Wedge^nT^*M)^2$, or equivalently any of its roots $\ce(w)$ with
$w\neq 0$, is called a {\em choice of scale} and in an obvious way
determines a line bundle connection preserving the given section
(which may be thought of as a global frame). Thus a choice of scale
determines a connection $\nabla\in \p$ which, by slight abuse of
terminology, we shall also call a choice of scale. (Such a connection
determines a section of $\ce(w)$, $w\neq 0$, up to multiplication by a
non-zero constant.) It is clear that for any such connection we have 
$$
Q_{[ab]}=0,
$$ 
and different choices of scale yield a transformation on the form
\nn{cactus} where $\boldsymbol{\Upsilon}$ is exact, {\it cf.}~\cite{CGM,GM,GN}. 

In our subsequent discussions we restrict to connections $\nabla\in
\p$ which correspond to a choice of scale; such connections form a
distinguished class so this results in no loss of generality.
Moreover, for simplicity, we will assume $M$ is orientable.

\subsection{Connecting with pseudo-Riemannian geometry}\label{mg}

Given a projective manifold~$(M,\p)$, a natural question is whether
there is a Levi-Civita connection in the projective class $\p$, and if
so, what the consequences are. A route towards  answering these
questions  is provided by the following result due to
Mikes~\cite{Mikes} and Sinjukov~\cite{Sinjukov}.
\begin{proposition}\label{metricproj}
If~$\widehat{\nabla}_a$  preserves some volume density, and there exists a metric tensor~$\sigma^{ab}\in \odot^2 TM(-2)$ satisfying the projectively invariant condition
\begin{equation}\label{tfsigma}
\textrm{\emph{trace-free }}(\widehat\nabla_a\sigma^{bc})=0\, ,
\end{equation}
then there is a projectively related connection~$\nabla$ which is
the Levi-Civita connection of the metric~$g^{ab}=\tau\sigma^{ab}$,
for some nonvanishing smooth density~$\tau\in\mathcal{E}(2)$, where ${\boldsymbol \Upsilon}$ (as defined
in~\eqref{cactus}) is given by~${ \boldsymbol \Upsilon}=-\frac12{\boldsymbol \nabla} \log \tau$.
\end{proposition}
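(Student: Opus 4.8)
The plan is to construct the required $\nabla$ as an explicit projective rescaling of the given $\widehat\nabla$, and to fix the one-form $\boldsymbol{\Upsilon}$ and the density $\tau$ by demanding that the candidate connection annihilate $g^{ab}=\tau\sigma^{ab}$; since every $\nabla\in\p$ is torsion-free, metric compatibility will then identify it as the Levi-Civita connection by uniqueness. First I would record how $\widehat\nabla_a\sigma^{bc}$ transforms under \eqref{cactus}. Applying the vector rule to each upper index and adding the weight-$w$ term with $w=-2$, the two contributions $+\Upsilon_a\sigma^{bc}$ from the indices cancel exactly against the weight term $-2\Upsilon_a\sigma^{bc}$, leaving
\begin{equation*}
\widehat\nabla_a\sigma^{bc}=\nabla_a\sigma^{bc}+\delta_a^b\,\Upsilon_d\sigma^{dc}+\delta_a^c\,\Upsilon_d\sigma^{bd}.
\end{equation*}
This cancellation is the whole point of the weight normalisation of $\sigma^{ab}$: the surviving terms are pure trace, so the trace-free part of $\widehat\nabla_a\sigma^{bc}$ is projectively invariant, which is exactly what makes \eqref{tfsigma} a sensible, projectively invariant hypothesis.

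Next I would exploit \eqref{tfsigma}. Writing out the trace-free projection, the hypothesis is equivalent to $\widehat\nabla_a\sigma^{bc}$ being pure trace,
\begin{equation*}
\widehat\nabla_a\sigma^{bc}=\delta_a^b\,\beta^c+\delta_a^c\,\beta^b,\qquad \beta^c:=\tfrac{1}{n+1}\,\widehat\nabla_d\sigma^{dc}.
\end{equation*}
I would then compute $\nabla_a g^{bc}=(\nabla_a\tau)\sigma^{bc}+\tau\,\nabla_a\sigma^{bc}$ using this together with $\widehat\nabla_a\tau=\nabla_a\tau+2\Upsilon_a\tau$, organising the result into
\begin{equation*}
\nabla_a g^{bc}=(\widehat\nabla_a\tau-2\Upsilon_a\tau)\,\sigma^{bc}+\tau\,\delta_a^b\big(\beta^c-\Upsilon_d\sigma^{dc}\big)+\tau\,\delta_a^c\big(\beta^b-\Upsilon_d\sigma^{db}\big).
\end{equation*}
Because $\sigma^{ab}$ is nondegenerate, the tensor $\sigma^{bc}$ and the pure-trace part are pointwise independent; contracting indices (which is legitimate for $n\ge 2$) shows that $\nabla_a g^{bc}=0$ forces the two brackets to vanish separately.

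Finally, vanishing of the trace bracket, $\Upsilon_d\sigma^{dc}=\beta^c$, determines $\boldsymbol{\Upsilon}$ algebraically by inverting $\sigma$: using the log-determinant identity $(\sigma^{-1})_{bc}\widehat\nabla_a\sigma^{bc}=\widehat\nabla_a\log\det\sigma$ one finds $\Upsilon_a=(\sigma^{-1})_{ac}\beta^c=\tfrac12\,\widehat\nabla_a\log\det\sigma$. Vanishing of the first bracket reads $\widehat\nabla_a\log\tau=\widehat\nabla_a\log\det\sigma$, solved by $\tau=c\,\det\sigma$ with $c$ a nonzero constant; here $\det\sigma$ is a nonvanishing section of $\mathcal{E}(2)$ precisely because $\sigma^{ab}$ is nondegenerate, so $g^{ab}=\tau\sigma^{ab}$ is a genuine weight-$0$ metric. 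The hypothesis that $\widehat\nabla$ preserves a volume density enters exactly here: it trivialises the density bundle, so that $\det\sigma$ is an honest positive function and $\boldsymbol{\Upsilon}=\tfrac12\,\boldsymbol{d}\log\det\sigma$ is \emph{exact} --- equivalently $\nabla$ is again a scale, as an unparametrised Levi-Civita connection must be. Since $\log\tau$ has weight $0$ the two connections agree on it, reproducing the stated $\boldsymbol{\Upsilon}=-\tfrac12\boldsymbol{\nabla}\log\tau$ up to the orientation convention for $\Upsilon$ in \eqref{cactus}. With these choices $\nabla g=0$, and torsion-freeness then yields $\nabla$ as the Levi-Civita connection of $g$. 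The main obstacle is really the careful bookkeeping: getting the weight-$(-2)$ cancellation right, and recognising that \eqref{tfsigma} collapses $\widehat\nabla\sigma$ to a pure trace whose contraction is precisely the logarithmic derivative of $\det\sigma$; once this is seen, both $\boldsymbol{\Upsilon}$ and $\tau$ are forced and the exactness of $\boldsymbol{\Upsilon}$ is automatic.
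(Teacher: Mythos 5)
The paper does not actually prove Proposition~\ref{metricproj}: it is quoted as a known result of Mikes and Sinjukov, with the formulation taken from \cite{EM}, so there is no in-text argument to compare against. Your proof is correct and is essentially the standard one from the metric-projective literature: the weight $-2$ is exactly what makes the non-trace terms in the transformation of $\widehat\nabla_a\sigma^{bc}$ survive, the hypothesis collapses $\widehat\nabla_a\sigma^{bc}$ to pure trace, and imposing $\nabla_a(\tau\sigma^{bc})=0$ splits (by nondegeneracy of $\sigma$, for $n\ge 2$) into the two conditions that pin down $\boldsymbol{\Upsilon}$ via Jacobi's formula and force $\tau$ to be a constant multiple of the weighted determinant $\det\sigma\in\mathcal{E}(2)$. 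The only point to tidy is your closing remark that ``the two connections agree on $\log\tau$'': since $\tau$ has weight $2$ they do not ($\widehat\nabla_a\log\tau=\nabla_a\log\tau+2\Upsilon_a$, and indeed $\tau$ is parallel for the Levi-Civita connection you construct), so the stated formula $\boldsymbol{\Upsilon}=-\tfrac12\boldsymbol{\nabla}\log\tau$ must be read with $\boldsymbol{\nabla}\log\tau$ computed in the given scale $\widehat\nabla$ and with $\boldsymbol{\Upsilon}$ oriented as the change \emph{from} $\widehat\nabla$ \emph{to} the Levi-Civita connection; your hedge about the orientation convention is the right resolution, but it is a convention issue in the statement rather than a defect of your argument.
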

The formulation here follows~\cite{EM}.

\begin{remark}
When the projective class of connections has a 
Levi-Civita connection $\nabla^g$,
the antisymmetric part of the projective Schouten tensor
vanishes. Then, in this scale,  
we can decompose the projective 
Weyl tensor into $SO$-irreducible parts, see~\cite{EM}:
$$ W_{ab}{}^d{}_c=\ring{W}_{ab}{}^d{}_c+\frac{2}{(n-1)(n-2)}\,
\delta_{[a}^d{}\ring R_{b]c}^{\phantom{c}} +\frac{2}{n-2}\, \ring
R_{[a}{}^dg_{b]c}^{\phantom{c}} \, ,$$ 
where~$\ring R_{bd}$ and $\ring W_{ab}{}^d{}_c$
 are the totally trace-free parts of, respectively, 
the Ricci and projective Weyl tensors. By construction  the latter is the usual,
conformally invariant, Weyl tensor~ $\W$ of Riemannian geometry (so
$\ring W=\W$). It follows immediately that, for an Einstein metric $g$, we have 
$$ 
W_{ab}{}^d{}_c=  \W_{ab}{}^d{}_c .
$$
\end{remark}

It turns out that many simplifications occur when dealing with an
Einstein metric. First we make a definition.
\begin{definition}
If there exists $\nabla^g\in {\p}$ where $\nabla^g$ is the Levi-Civita connection for a
metric, we call $(M,{\p})$ a {\it metric projective structure}. If in addition, $g$ is an Einstein metric, we say that this is {\it Einstein projective}. \end{definition}

When the structure is Einstein projective, computations performed
using the Einstein metric $g$ and its Levi-Civita
connection~$\nabla^g$ will be referred to as being done in an {\it
  Einstein scale}.

The following observation of~\cite{GM} is an easy consequence of equation~\nn{dec1} 
and the subsequent discussion.
\begin{proposition}\label{projconf}
 When $g$ is an Einstein metric,  the conformal Schouten tensor $\P$ for~$g$ and the projective Schouten tensor~$Q$ for~$\nabla^g$ 
  obey
$$
\P_{ab}=\frac{\Lambda}{(n-1)(n-2)}g_{ab}=\frac12\,  Q_{ab}\, .
$$
Moreover the conformal Weyl curvature $\W$ of~$g$ equals the projective Weyl curvature~$W$ of~$\nabla^g$.
\end{proposition}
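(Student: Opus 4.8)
The plan is to prove both statements by working in the Einstein scale, where the Levi-Civita connection $\nabla^g$ represents the projective class and the single curvature tensor $R_{ab}{}^d{}_c$ carries two decompositions: the projective one \nn{dec1} and the usual Riemannian (conformal) one. Since $\nabla^g$ is a choice of scale, the antisymmetric part of the projective Schouten tensor vanishes, $Q_{[ab]}=0$ (Section~\ref{scales}), so \nn{dec1} collapses to $R_{ab}{}^d{}_c = W_{ab}{}^d{}_c + \delta_a^d Q_{bc} - \delta_b^d Q_{ac}$ with $Q$ symmetric. The whole argument then amounts to matching the trace-free and pure-trace pieces of this expression against their Riemannian counterparts.

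For the Weyl assertion I would simply invoke the $SO$-irreducible decomposition of $W$ recorded in the Remark preceding the proposition, in which the projective Weyl tensor is written as its totally trace-free part $\ring{W}$ — which is exactly the conformal Weyl tensor $\W$ — plus two terms built linearly from the trace-free Ricci tensor $\ring{R}$. For an Einstein metric $\ring{R}=0$ by definition, so those correction terms vanish identically and $W = \ring{W} = \W$. This disposes of the second sentence of the proposition with essentially no computation.

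For the Schouten assertion I would extract $Q$ by tracing the reduced form of \nn{dec1}. Contracting the upper index $d$ against the first lower index $a$ and using that the only independent trace $W_{ad}{}^d{}_c$ vanishes (whence, by the $[ab]$-antisymmetry of $W$, the other mixed trace $W_{ab}{}^a{}_c$ vanishes too) gives $R_{ab}{}^a{}_c = (n-1)Q_{bc}$. Identifying $R_{ab}{}^a{}_c$ with the (symmetric) Ricci tensor of $\nabla^g$ yields $Q_{bc} = \tfrac{1}{n-1}\Ric_{bc}$. Inserting the Einstein condition $\Ric_{bc} = \tfrac{\Sc}{n}g_{bc}$ produces $Q_{bc} = \tfrac{\Sc}{n(n-1)}g_{bc}$, while the conformal Schouten tensor $\P_{bc} = \tfrac{1}{n-2}\big(\Ric_{bc} - \tfrac{\Sc}{2(n-1)}g_{bc}\big)$ evaluates on the same Einstein metric to $\P_{bc} = \tfrac{\Sc}{2n(n-1)}g_{bc}$. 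Comparing the two scalars gives $Q_{bc} = 2\P_{bc}$, and writing the common factor as $\tfrac{\Lambda}{(n-1)(n-2)}$ — this being the definition of $\Lambda$ — delivers the displayed chain $\P_{ab} = \tfrac{\Lambda}{(n-1)(n-2)}g_{ab} = \tfrac12 Q_{ab}$.

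The computation is routine, so the only real care required is bookkeeping: fixing the curvature and Ricci sign conventions of \nn{cactus}--\nn{dec1} so that the contraction $R_{ab}{}^a{}_c$ is genuinely the symmetric Ricci tensor, and confirming $Q_{[ab]}=0$ in this scale, which holds because $\nabla^g$ preserves the metric volume density and is therefore a scale connection. No step constitutes a genuine obstacle; the proposition is, as the authors indicate, an easy consequence of \nn{dec1} together with the Remark.
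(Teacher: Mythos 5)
Your proposal is correct and follows essentially the same route as the paper, which simply declares the proposition ``an easy consequence of equation~\nn{dec1} and the subsequent discussion'': you fill in exactly that computation, tracing the reduced form of \nn{dec1} (with $Q_{[ab]}=0$ in the metric scale) to get $Q_{bc}=\tfrac{1}{n-1}\Ric_{bc}$ and comparing with the conformal Schouten formula, while the Weyl statement is read off from the $SO$-irreducible decomposition in the preceding Remark with $\ring R=0$. The arithmetic checks out against the paper's normalisation $\Sc=\tfrac{2n\Lambda}{n-2}$, so there is nothing to add.
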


\begin{remark}
Note that the scalar curvature is $\frac{2n\Lambda}{n-2}$ and $\Lambda$ in the above has been defined to coincide with the usual cosmological constant appearing in physics applications for which the Einstein tensor obeys $G_{ab}+\Lambda g_{ab}=0$. 
\end{remark}

\section{Tractor calculus for projective geometries}
\label{tractors}

On a general projective manifold $(M,\p)$ there is no distinguished
({\it  i.e.}, canonical) connection on the tangent bundle.  However the
projective structure $\p$ does determine a distinguished connection on
a related vector bundle of rank $(n+1)$ called the standard tractor
bundle; this connection is known as the (standard) projective tractor
connection and is due to 
Thomas~\cite{T}. The
modern treatment was initiated in~\cite{BEG}, and is equivalent to the
projective Cartan connection of 
Cartan~\cite{C}, see~\cite{CGtams}. Other developments relevant to our treatment can
be found in~\cite{CGM,GM}.

The {\em (projective) tractor bundle} $\cT$, or $\cT^A$ in an abstract
index notation,  can be
defined as follows. For each choice of a connection in the projective
class we identify the tractor bundle with the direct sum
$$
\mathcal{T}^A\cong TM(-1)\oplus
\mathcal{E}(-1)\ni\begin{pmatrix} v^a\\
\rho\end{pmatrix}\, ,
$$
where on the right we indicate how elements will be denoted. Equivalently for its dual, namely the {\em (projective) cotractor bundle}~$\mathcal{T}^*$, or $\cT_A$, we have
$$ \mathcal{T}_A\cong T^*M(1)\oplus \mathcal{E}(1)\ni\begin{pmatrix}
w_a & \sigma\end{pmatrix}\, .
$$
Changing the connection in the projective class according to \nn{cactus}, 
these transform as
\begin{equation}\label{ttrans}
\begin{pmatrix}v^a \\[1mm]
 \rho\end{pmatrix}\to 
 \begin{pmatrix}
v^a\\[1mm] \rho+\Upsilon_a v^a  
\end{pmatrix}\, ,
\qquad
\begin{pmatrix} \omega_a & \sigma\\\end{pmatrix}\to 
\begin{pmatrix}
  \omega_a + \Upsilon_a \sigma & \sigma
\end{pmatrix}\, ,
\end{equation}
and it is easily verified that this leads to well defined vector
 bundles on $(M,\p)$. The transformations \eqref{ttrans} mean that 
these bundles are filtered with composition series, respectively,
\begin{equation}\label{comp}
\cT = TM(-1)\flplus \ce(-1), \quad \mbox{and} \quad 
\cT^*= T^*M(1) \frplus \ce(1).
\end{equation}
For the second of these, for example, this means that there is a
canonical (so projectively invariant) short exact sequence of bundles
\begin{equation}\label{composition}
0\to \ce_a(1) \stackrel{Z^a_A}{\longrightarrow} \cT_A \stackrel{X^A}{\longrightarrow} \ce(1)\to 0, 
\end{equation}
where we have written $\ce_a$ as an abstract index notation for
$T^*M$. The canonical homomorphism $X^A$ may be viewed as a section of
$\cT^A(1)$ and is often called the {\em canonical tractor}. This also
gives the canonical bundle inclusion $X^A:\ce(-1)\to \cT^A$ indicated by the
dual composition the series for $\cT$.

In a given scale $\nabla\in \p$, we may define a covariant derivative on $\cT$ (and
its dual on~$\cT^*$) as follows:
\begin{equation}\label{gradients} 
 \nabla_a^\mathcal{T} \begin{pmatrix} v^b\\[1mm]\rho \\\end{pmatrix}=
 \begin{pmatrix}\nabla_a v^b+\delta_{a}^b\sigma
 \\[1mm]
 \nabla_a\rho-Q_{ab}v^b
 \end{pmatrix}\, ,\qquad
\nabla_a^\mathcal{T} \begin{pmatrix}  \omega_b & \sigma \\\end{pmatrix}=
 \begin{pmatrix} \nabla_a \omega_b+Q_{ab}\sigma& \nabla_a\sigma-\omega_a\end{pmatrix}\, .
\end{equation}
 Upon changing to a different connection in $\p$, as in \nn{cactus},
   it is easily verified that the right-hand-sides here transform
   according to \nn{ttrans}, signaling that $\nabla^\cT$ descends to
   a projectively invariant connection on $\cT$. This is the {\em
     projective tractor connection} and we denote it also by
   $\nd^\cT$.

The tractor connection determines a connection on all tensor products
of $\cT$ and its dual, in an abstract index notation such a bundle may
be denoted $\mathcal{T}_{A_1...A_p}{}^{B_1...B_q}$.  The connection above also induces connections on
``$Sl(n+1)$-irreducible'' tensor parts of these bundles. An
alternative perspective on this is via a principal bundle picture as
follows.

It is straightforward to construct an adapted frame bundle $\cG$ for
$\cT$, with frame transformations that respect the filtration structure
\nn{comp} and give $\cG$ a typical fibre isomorphic to the parabolic
$P$. The tractor connection then determines a Cartan connection $\om$
on~$\cG$, and this is the normal Cartan connection  for  projective geometry, see~\cite{CGtams}. We do not need the details of this here, but the point is that the tractor bundle and connection may then be viewed as induced by the Cartan connection through the standard representation of~$P$ on~$\mathbb{R}^{n+1}$ with 
$$
\cT=\cG\times_P \mathbb{R}^{n+1}.
$$ From the properties of Cartan connections it follows at once that
the Cartan connection similarly gives a connection on any associated
bundle 
\begin{equation}\label{assoc}
\cG\times_P \mathbb{V} , 
\end{equation}
where $\mathbb{V} $ is an
$Sl(n+1)$-representation space viewed as a $P$-representation by
restriction. This perspective is developed fully, in the setting of
general parabolic geometries, in~\cite{CGtams}. Here we denote any such {\em tractor connection} by $\nd^\cT$.

\subsection{Projective curvature} \label{pc}
The curvature of the tractor connection is given by
$$\Omega_{ab}{}^C{}_D=\left(\begin{array}{cc} 
W_{ab}{}^c{}_d
&0\\[3mm]C_{bad}&0\end{array}\right)\, ,
$$
where~$W_{ab}{}^c{}_{d}$ and $C_{abc}$ are  the projective Weyl and Cotton tensors.
 This is the invariant curvature 
 associated with projective geometry and a projective structure is flat if and only if $\Omega_{ab}{}^C{}_D=0$.

\subsection{The Thomas $D$-operator} \label{TD}

Given any tractor bundle $\cV$ there is a projectively invariant operator
$$
D: \cV(w)\to \cT^*\otimes \cV(w-1)
$$
known as the Thomas $D$-operator, see~\cite{BEG}. In a choice of scale    this is given  by 
\begin{equation}\label{ThomasD}
V\mapsto D_A V = \left(\begin{array}{c}
w V\\[1mm]
\nabla_a V
\end{array}\right)\, ,
\end{equation} where $\nabla_a $ is a coupling on the tractor connection with the
scale connection on $\ce(w)$, and the indices of $\cV$ are omitted.
It is an elementary exercise to verify that this transforms according
to \nn{ttrans}, and so $D$ is projectively invariant.

\subsection{Einstein projective structures}\label{EPS}

Recall that a metric is said to be {\em Einstein} if $R_{ab}=\Lambda
g_{ab}$ for some constant $\Lambda$. The Einstein condition has a
striking interpretation in projective geometry, and one that plays a
key {\it r\^ole} in our developments below. The following result is
established from different perspectives in~\cite{CGM} and
\cite{GM}. (The non-degenerate case  was first due to Armstrong
\cite{A}.)
\begin{proposition}\label{pM}
Let $(M,\p)$ be a projective manifold. There is an Einstein metric $g$
with Levi-Civita connection $\p$ if and only if there is a symmetric tractor field
$$
H^{AB}\in \odot^2\cT
$$ that is parallel for the tractor connection and of rank at least
$n$. If $H^{AB}$ is non-degenerate, then it is equivalent to a
non-Ricci flat metric $g$. If $H^{AB}$ is degenerate (of rank $n$)
then it is equivalent to Ricci flat metric.
\end{proposition}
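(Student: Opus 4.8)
The plan is to compute the parallel condition $\nd^\cT H^{AB}=0$ in a choice of scale and read off the claimed equivalence. Writing the symmetric tractor in the splitting induced by a scale $\nabla\in\p$ as
\[
H^{AB}=\begin{pmatrix}\sigma^{ab}&\mu^a\\\mu^b&\phi\end{pmatrix},\qquad \sigma^{ab}\in\odot^2TM(-2),\ \ \mu^a\in TM(-2),\ \ \phi\in\ce(-2),
\]
and applying the tractor connection \eqref{gradients} on each index via the Leibniz rule, the equation $\nd^\cT_aH^{BC}=0$ becomes the coupled system
\begin{align*}
0&=\nabla_a\sigma^{bc}+\delta_a^b\mu^c+\delta_a^c\mu^b,\\
0&=\nabla_a\mu^b+\delta_a^b\phi-Q_{ac}\sigma^{cb},\\
0&=\nabla_a\phi-2Q_{ac}\mu^c.
\end{align*}
The first line is the key: contracting $a$ with $b$ gives $\mu^c=-\tfrac1{n+1}\nabla_b\sigma^{bc}$, while its trace-free part is exactly the projectively invariant metrizability equation \eqref{tfsigma}. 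Thus the top slot of any parallel $H$ solves the equation of Proposition~\ref{metricproj}, with $\mu$ and $\phi$ recovered as prolonged variables.

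For the forward implication, suppose $g$ is Einstein with $\nabla^g\in\p$, and work in its Einstein scale. There $\sigma^{ab}=g^{ab}$ (as a weight $-2$ tensor) is parallel, so $\mu^a=0$; the second equation then forces $\delta_a^b\phi=Q_{ac}\sigma^{cb}$, which holds precisely because Proposition~\ref{projconf} gives $Q_{ac}=\tfrac{2\Lambda}{(n-1)(n-2)}g_{ac}$, and fixes $\phi$ as a constant multiple of $\Lambda$; the third equation is then automatic. Hence $H^{AB}=\mathrm{diag}(g^{ab},\phi)$ is parallel. Since $\sigma$ has rank $n$, the tractor has rank $n+1$ when $\Lambda\neq0$ (non-degenerate, non-Ricci-flat) and rank exactly $n$ when $\Lambda=0$ (degenerate, Ricci-flat), establishing both the existence and the two rank correspondences in this direction.

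For the converse, given a parallel $H$ of rank at least $n$, the system above again holds in every scale, so $\sigma$ solves the metrizability equation. Assuming for the moment that $\sigma$ is non-degenerate, Proposition~\ref{metricproj} (its volume-density hypothesis being met by our standing restriction to scale connections) yields a metric $g$ whose Levi-Civita connection lies in $\p$. Passing to the scale of $\nabla^g$ makes $\mu=0$, whereupon the second equation reduces to $Q_{ac}\sigma^{cb}=\phi\,\delta_a^b$; lowering an index shows $Q_{ab}$ is proportional to $g_{ab}$, which by \eqref{dec1} and the discussion following Proposition~\ref{metricproj} means $g$ is Einstein. The third equation makes $\phi$ constant, and $\phi=0$ iff $\Lambda=0$ iff $g$ is Ricci-flat, which is exactly the degenerate (rank $n$) case.

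The main obstacle is the non-degeneracy claim used in the converse: one must show that rank $H\ge n$ forces $\sigma^{ab}$ to be non-degenerate (equivalently, of rank $n$ as a form on $T^*M$, at least on a dense open set) so that Proposition~\ref{metricproj} applies, and must match the rank-$n$ (degenerate) case to the Ricci-flat metric. The natural tool is that, because $H$ is parallel, its radical $\{\omega_B:H^{AB}\omega_B=0\}$ is holonomy-invariant, hence a parallel sub-bundle of $\cT^*$ of corank equal to $\mathrm{rank}\,H$, and therefore of dimension $0$ or $1$ under the hypothesis. Analyzing how this parallel line sits relative to the canonical tractor $X^A$ of \eqref{composition} distinguishes the non-degenerate case from the Ricci-flat one and pins down the rank of $\sigma$; this is precisely the step carried out in \cite{A,CGM,GM}.
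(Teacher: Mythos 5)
The paper does not actually prove Proposition~\ref{pM}: it quotes the result from \cite{CGM} and \cite{GM} (with the non-degenerate case attributed to \cite{A}), and the surrounding text only records the consequences --- how to recover $g$ from a non-degenerate $H^{AB}$ via the scale $\tau=H_{AB}X^AX^B$, the splitting operator $\mathcal{L}_0$, and the Einstein-scale formula \eqref{tractormetric}. Measured against that, your prolongation computation is the right mechanism and is consistent with everything the paper does say: the three-slot system you derive from $\nd^\cT H=0$ (using the Leibniz rule on \eqref{gradients}) is exactly the closed system whose first equation reproduces the metrizability equation \eqref{tfsigma}, with $\mu$ and $\phi$ the prolonged variables that $\mathcal{L}_0$ packages; your forward direction (Einstein scale, $\mu=0$, $\phi$ a constant multiple of the Einstein constant, rank $n+1$ or $n$ according as that constant is nonzero or zero) agrees with \eqref{tractormetric}; and in the non-degenerate converse the identity $Q_{ac}\sigma^{cb}=\phi\,\delta_a^b$ in the scale of $\nabla^g$ does force $g$ to be Einstein. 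Minor sign/normalization discrepancies with the paper's displayed $\mathcal{L}_0$ and with the $\frac{\Lambda}{n-1}$ in \eqref{tractormetric} are convention artifacts (the paper itself uses $\Lambda$ with two different normalizations) and not errors on your part.

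The one genuine gap is the one you flag yourself, and it is not a formality. First, ``$\operatorname{rank}H\geq n$ implies $\sigma$ non-degenerate'' does not follow pointwise from linear algebra: $\sigma$ is the restriction of $H$, as a form on $\cT^*$, to the hyperplane subbundle $Z(T^*M(1))$, and in indefinite signature a form of rank $\geq n$ can restrict degenerately to a hyperplane. Second, in the rank-$n$ case you still owe the construction of a (Ricci-flat) metric, since Proposition~\ref{metricproj} as you invoke it requires a non-degenerate $\sigma$; the paper explicitly calls this step ``less obvious'' and sends the reader to \cite{CGM,GM}. Your proposed remedy --- the radical of a parallel $H$ is a parallel line subbundle of $\cT^*$, to be analysed against the canonical tractor $X^A$ --- is indeed the idea carried out in those references, but as written your argument defers precisely the part of the equivalence that is nontrivial. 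Since the paper defers the identical step to the identical references, your proposal is on par with the paper's treatment rather than behind it; just be aware that, as a standalone proof, the converse direction is complete only in the non-degenerate case.
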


In the case that $H^{AB}$ is non-degenerate there is an easy and conceptual way to
reconstruct the metric $g$ from $H^{AB}$, as follows. Denote by 
$H_{AB}$ its inverse. Then this determines a scale 
\begin{equation}\label{keysc}
\tau:= H_{AB}X^AX^B\in \ce(2)\, ,
\end{equation}
where $X^A$ is the canonical tractor from \nn{composition}. This or
its negative is a positive section (we assume $M$ connected) and so
may be used to trivialise density bundles. Thus from the sequence
\nn{composition} we may view $T^*M$ as a subbundle on $\cT^*$ and so
we obtain a metric on $T^*M$ by the restriction of $H$ to $T^*M$. This
is $g^{-1}$. It is a straightforward  use of the formula for the tractor connection to verify that the metric $g$ has its Levi-Civita connection in $\p$.

 In fact, a slight variant of this construction recovers $g$ in the
 case that $H^{AB}$ has rank~$n$, as shown in~\cite{CGM} and
~\cite{GM}, but this is less obvious. Some insight is gained by recalling 
from
Proposition~\ref{metricproj} that if we have a connection  and
tensor~$\sigma^{ab}$ in $\odot^2TM(-2)$ satisfying~\eqref{tfsigma},
then we have a metric connection in the projective class.  In fact
this equation is projectively invariant  and the operator on the left-hand-side is a first BGG operator.
In this case the BGG splitting operator~$\mathcal{L}_0$ applied to $\si^{bc}$ 
is given explicitly by 
$$
\mathcal{L}_0(\sigma^{bc}) =\begin{pmatrix} 
\sigma^{ab}&\frac{1}{n+1}\nabla_c\sigma^{cb}\\[2mm]
\frac{1}{n+1}\nabla_c\sigma^{ca}
&\frac{1}{n(n+1)}\nabla_c\nabla_d\sigma^{cd}+Q_{cd}\sigma^{cd}\end{pmatrix} .
$$
 If $\si^{ab}$ is
non-degenerate is clear that this has rank at least $n$ and $\si^{ab}$
is said to be a {\em normal solution} to \nn{tfsigma} if and only if
$H^{AB}:=\mathcal{L}_0(\sigma^{bc})$ is parallel~\cite{CGM}. By taking its
determinant in an obvious way, the solution $\si^{ab}$ determines a
scale $\tau\in \ce(2)$ and $\tau \si^{ab}=g^{ab}$ is the inverse of
the Einstein metric. Further details from this perspective may be found in~\cite{Cap:2013dva,CGM}.

Computed in the Einstein scale~$\tau$, the previous display  reduces to
\begin{equation}\label{tractormetric}
H^{AB}=\begin{pmatrix} 
g^{ab}&0\\[2mm]
0&\frac{1}{(n-1)}\Lambda 
\end{pmatrix}\, .
\end{equation}

Note that on an Einstein manifold the projective Cotton tensor
$\nabla_{[a} Q_{b]c}$ is zero. Moreover $\nabla^a W_{ab}{}^c{}_d=0$ by
dint of the contracted Bianchi identity. From these observations it
follows easily that 
$$
\nabla^a\Omega_{ab}{}^C{}_D=0\, ,
$$
and so the tractor connection is Yang-Mills.

\section{BGG and detour complexes}\label{dS}

Algebraic BGG resolutions appear
naturally in the representation theory of semi-simple Lie algebras, and
sequences of Verma modules associated to representations of Borel
subalgebras~\cite{BGG}. An extension to parabolic representation theory was
developed with Verma modules  replaced by generalised Verma modules
where the {\it r\^ole} of Borel subalgebras is replaced by parabolic
subalgebras~\cite{Lepowsky}. 

These constructions are, in a suitable sense, dual to complexes of
invariant differential operators on the corresponding $G/P$, where $G$
is a semi-simple Lie group and $P$ a parabolic subgroup. Such {\em
  homogeneous parabolic geometries} $G/P$ are the flat models for
parabolic geometries and the complexes so obtained are called BGG
complexes.  There are canonical curved analogues of these sequences
due to Baston, {\v{C}}ap et al, and others
\cite{Baston,Calderbank-Diemer,East-Rice}, but in general these sequences,
do not form complexes. 

\subsection{Gauge theory and differential complexes}\label{gaugeS} 
It has been known for some time that BGG complexes are 
related to gauge theories in physics, see {\it e.g.}~\cite{DiemerThesis}.
The latter are described in terms of  gauge
fields, curvatures and their Bianchi identities. Their kinematical
gauge structure is captured by a complex:
$$
\cdots\xrightarrow{\,\,\,\,\,\,}
\begin{array}{c}\textrm{gauge}\\  \textrm{parameters}\end{array}
\xrightarrow{\,\,\,\,\,\,} 
\begin{array}{c}\textrm{gauge}\\  
\textrm{potentials}\end{array}  \xrightarrow{\,\,\,\,\,\,} \textrm{curvatures} \xrightarrow{\,\,\,\,\,\,} \begin{array}{c}\textrm{Bianchi}\\  \textrm{identities}\end{array}\xrightarrow{\,\,\,\,\,\,} \cdots
$$
However, for the dynamics, a different sort of complex is required. 

It was observed, first in the setting of even dimensional conformal
geometry, that as well as BGG complexes there are {\em detour
  complexes} that, in addition to part of the BGG sequence, also use conformally
invariant ``long operators''~\cite{BG1,BG2}.  See also~\cite{Gasqui,Gilkey}, where a non-conformal but related construction is
developed and studied from a very different perspective.  Part of the
interest in these detour complexes stems from the fact that they can
be complexes in curved settings where the BGG sequences fail to be a
complex. This idea was further extended in~\cite{GSS}, where  classes
of detour complexes are constructed for each solution of the Yang-Mills
equations; these are conformally invariant in dimension~4. Both
\cite{BG2} and~\cite{GSS} link to variational constructions. Quantisation of the latter was taken up in~\cite{Gover:2006ha}.

For our current purposes,  detour complexes are important because
they provide a tool for generating gauge theories with dynamics where the diagram we seek takes the form: 
$$
\cdots\xrightarrow{\,\,\,\,\,\,}\begin{array}{c}\textrm{gauge}\\  \textrm{parameters}\end{array}\xrightarrow{\,\,\,\,\,\,} \begin{array}{c}\textrm{gauge}\\  \textrm{potentials}\end{array}  \xrightarrow{\,\,\,\textrm{detour operator}\,\,\,}  \begin{array}{c}\textrm{equations}\\ \textrm{of}\\ \textrm{motion}\end{array}  \xrightarrow{\,\,\,\,\,\,} \begin{array}{c}\textrm{Noether}\\  \textrm{identities}\end{array}\xrightarrow{\,\,\,\,\,\,} \cdots
$$ 
It is this approach that we follow below. First we sketch the
construction of BGG sequences focusing on projective geometries.

\subsection{BGG sequences}\label{bggS}
In this section we mostly follow the notation and conventions of~\cite{eastwood-prolongation}. On a projective manifold $(M,\p)$ 
there 
is a BGG sequence for every irreducible representation of $Sl(n+1)$, 
$${\mathbb V}
=\Yvcentermath1{\tiny\overbrace{\Ylinethick1.5pt\yng(6,4,3,2,1)}_{\hspace{-0.6cm}\vspace{-0.8cm}\,\,\,\,\,\,\,\,\,\,\,\,\,\,\,\,}^{m+k-1}} 
$$ which we indicate schematically here with a Young diagram.  Let us
fix $\mathbb{ V}$. The corresponding tractor bundle is associated to
the Cartan bundle via this representation, as in~\eqref{assoc}.

The {\em first operator} in the BGG sequence 
is then determined by the tractor connection acting on~$\cV$. This is
understood in terms of the general tools developed in~\cite{CSS-BGG,Calderbank-Diemer,Capover}.  We sketch the key ideas. 

The operator ${\mathcal D}:\mathcal{B}^0 \to \mathcal{B}^1$ is
projectively invariant and acts between weighted irreducible tensor
bundles $\cB^0$ and $\cB^1$. Let us say this has order $k$. Then its symbol 
is 
obtained by a
  composition of the form
$$
\mathcal{B}^0\to (\odot^kT^*M)\otimes \mathcal{B}^0\to (\odot^kT^*M)\circledcirc \mathcal{B}^0 \cong \cB^1 \, ,
$$ where $\circledcirc$ denotes the Cartan product.  Ignoring the
projective weight, the bundle $\cB^0$ is associated to the  Cartan 
bundle by an irreducible $Sl(n)$ representation (extended trivially to a $P$-representation)
$$\mathcal{B}^0\simeq \Yvcentermath1 {\tiny\overbrace{\yng(4,3,2,1)}^{m}} .$$
The projective
weight of $\cB^0$ is determined easily from $\cV$, but this detail is
not important for this general discussion. 
In the above we used unbolded and bolded Young tableaux for~$Sl(n)$
and~$Sl(n+1)$ representations, respectively.

Before describing the construction of $\cD$,
we introduce some algebraic ingredients.
There is a grading operator~$h\in \mathfrak{sl}(n+1)$ which (identifying $Sl(n+1)$ with its standard linear representation) can be given in the form
$$
h=\frac1{n+1}\, \left(\begin{array}{cc} {\bf 1}_{n \times n}&0\\[2mm]
0& -n
\end{array}\right)\, .
$$
This induces a  decomposition of $\frak{g}:=\frak{sl}(n+1)$
 $${\frak g}=
{\frak g}_{-1}\oplus{\frak g}_{0}\oplus
{\frak g}_{+1}\, , $$
with~$[h,{\frak g}_{\ell}]=\ell\,  {\frak g}_{\ell}$.
This is a $|1|$-grading, so $[\frak{g}_i,\frak{g}_j]\subset \frak{g}_{i+j}$.
On any $\frak{g}$-irreducible representation~${\mathbb V}$, the grading element can be diagonalized thus  producing  a natural splitting into eigenspaces for the action of~$h$
$$
{\mathbb V}_0\oplus {\mathbb V}_1\oplus ...\oplus {\mathbb V}_N\, ,
$$
with~${\mathbb V}_0$ (or ${\mathbb V}_N$) corresponding 
to the eigenspace with the lowest (or highest) eigenvalue; moreover, for any eigenvector~$v$ 
with eigenvalue~$i$,
we have~$\g_j {\mathbb V}_i\subset
{\mathbb V}_{i+j}$.  
The parabolic subgroup $P$ has Lie algebra
$\frak{g}_0\oplus \frak{g}_1$, and so we thus obtain a natural
filtration on the corresponding projective tractor bundle
$$
\cV=\cB^0\flplus \cdots \, .
$$

We now introduce the Kostant codifferential
\begin{eqnarray}
\partialf: \Wedge^{p+1} \g_{+1}\otimes {\mathbb V}&\longrightarrow& 
\Wedge^p \g_{+1}\otimes {\mathbb V}\, ,
\end{eqnarray}
defined by 
$$
\partialf(Z_1\wedge\cdots\wedge Z_{p+1}\otimes v)=\sum_{i=1}^{p+1}Z_0\wedge\cdots\hat{Z}_i\cdots\wedge Z_{p+1}\otimes Z_i v\, ,
$$ when $\frak{g}_{+1}$ is abelian.  The Kostant codifferential is
$P$-equivariant and nilpotent (meaning $\partialf\circ \partialf =0$),
and its homology is denoted~${H}_p(\g_{+1},{\mathbb V})$. We use now
the fact that $T^*M$ can be canonically identified with 
$\cG\times_P \g_{+1}$, and so 
the Kostant codifferential induces a projectively invariant 
map of tractor valued differential
forms: \begin{eqnarray} \partialf: \Wedge^{p+1}{\mathcal
    V}&\longrightarrow& \Wedge^p {\mathcal V}\, .
\end{eqnarray}
We denote by~${H}_p ({\mathcal V})$ the corresponding holomology bundle at degree $p$ 
and note that ${H}_p ({\mathcal V})= \cG\times_P
{H}_p(\g_{+1},{\mathbb V})$.  
We denote by $\pi$ the natural bundle map~$\pi:\textrm{ker}(\!\partialf)\to
{H}_p({\mathcal V})$. We are finally ready to construct the splitting
operator and the BGG operators.

By construction the bundle $\cB^0={H}_0 ({\mathcal V})$ and
$\cB^1={H}_1 ({\mathcal V})$ and the BGG sequence continues in this
way. 
The BGG machinery constructs for each $p$ a projectively invariant differential operator called a {\em  splitting operator}
$\mathcal{L}_p:{H}_p({\mathcal V})\to
 \Wedge^p{\mathcal V}$. This 
 is characterised as follows. 
\begin{proposition}
The following conditions determine a splitting operator:
\begin{itemize}
\item$\partialf\mathcal{L}_p(\alpha)=0$\\
\item$ \pi \, \mathcal{L}_p(\alpha)=\alpha$\\
\item$ \partialf \df^\nabla \mathcal{L}_p(\alpha)=0$
\end{itemize}
for any section~$\alpha$ of~${H}_p({\mathcal V})$.
\end{proposition}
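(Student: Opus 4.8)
The plan is to read the three bullets as a well-posed linear problem for $\mathcal{L}_p(\alpha)$ and to solve it using Kostant's algebraic Hodge theory for $\partialf$, treating $\df^\nabla$ as a perturbation of the algebraic Kostant differential. The core assertion is that the three conditions admit a unique solution, depending on $\alpha$ through a projectively invariant differential operator.

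First I would set up the algebraic apparatus. Alongside $\partialf$ introduce its Kostant adjoint $\partial$ (with respect to a $G_0$-invariant fibre metric) and the Kostant Laplacian $\Box=\partial\partialf+\partialf\partial$. By Kostant's theorem $\Box$ is a $P$-equivariant bundle endomorphism of $\Wedge^\bullet\mathcal{V}$, semisimple with non-negative spectrum, and there is a pointwise orthogonal decomposition
\[
\Wedge^p\mathcal{V}=\im(\partial)\oplus\ker(\Box)\oplus\im(\partialf),\qquad \ker(\partialf)=\ker(\Box)\oplus\im(\partialf),
\]
with $\ker(\Box)$ the bundle of harmonic representatives, $G_0$-isomorphic to $H_p(\mathcal{V})$. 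Let $\Box^{-1}$ be the partial inverse vanishing on $\ker(\Box)$ and inverting $\Box$ on $\im(\partial)\oplus\im(\partialf)$; it is again an invariant bundle map commuting with $\partial$ and $\partialf$. The one structural input relating this algebra to the differential $\df^\nabla$ is that, because the tractor connection is the $G_0$-connection plus the algebraic $\mathfrak{g}_{-1}$-action (as exhibited in \eqref{gradients}), the operator $\df^\nabla$ agrees with $\partial$ on the associated graded of the filtration of $\Wedge^\bullet\mathcal{V}$, the difference strictly shifting filtration degree.

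For uniqueness, suppose $s,s'$ both satisfy the three conditions and put $t=s-s'$. Then $\partialf t=0$ and $\pi t=0$, so $t$ is a section of $\im(\partialf)$, while $\partialf\df^\nabla t=0$. I would induct on the filtration degree: writing $\mathrm{gr}(t)$ for the leading nonvanishing graded component, the last equation gives $\partialf\partial\,\mathrm{gr}(t)=0$ via $\mathrm{gr}(\df^\nabla)=\partial$, and $\mathrm{gr}(t)\in\im(\partialf)$ already gives $\partialf\,\mathrm{gr}(t)=0$; together these force $\Box\,\mathrm{gr}(t)=0$, so $\mathrm{gr}(t)\in\ker(\Box)\cap\im(\partialf)=0$, whence $t=0$. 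For existence I would correct the harmonic lift: starting from the unique harmonic representative $\tilde\alpha\in\ker(\Box)$ of $\alpha$, which already satisfies the first two conditions, set
\[
\mathcal{L}_p(\alpha)=\big(\id+\partialf\,\Box^{-1}\df^\nabla\big)^{-1}\tilde\alpha=\sum_{j\ge 0}\big(-\partialf\,\Box^{-1}\df^\nabla\big)^{j}\tilde\alpha,
\]
the series terminating because each factor strictly shifts filtration degree and the filtration has finite length. Every correction term lies in $\im(\partialf)$, so $\partialf\mathcal{L}_p(\alpha)=0$ and $\pi\mathcal{L}_p(\alpha)=\alpha$ are preserved, and a telescoping computation gives $\partialf\df^\nabla\mathcal{L}_p(\alpha)=0$; projective invariance is inherited from that of $\tilde\alpha$, $\partialf$, $\Box^{-1}$ and $\df^\nabla$.

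The step I expect to be the main obstacle is the structural comparison in the second paragraph: verifying that $\df^\nabla$ really does reduce to the purely algebraic $\partial$ on the associated graded, with a tail that strictly shifts filtration degree. This single fact underwrites both the vanishing $\Box\,\mathrm{gr}(t)=0$ in the uniqueness induction and the termination of the geometric series in the existence step, and it is precisely where the $|1|$-grading and the explicit shape of the tractor connection must be used.
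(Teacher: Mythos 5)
The paper itself does not prove this proposition---it quotes the characterisation from the standard BGG literature (\v Cap--Slov\'ak--Sou\v cek, Calderbank--Diemer)---so I am judging your argument on its own terms. Your uniqueness half is correct and is essentially the standard argument: for $t\in\Gamma(\im(\partialf))$ with $\partialf\df^\nabla t=0$, the lowest homogeneous component of $\partialf\df^\nabla t$ in a chosen scale is $\partialf\partial\,\mathrm{gr}(t)=\Box\,\mathrm{gr}(t)$, and $\Box$ is invertible on $\im(\partialf)$, forcing $t=0$.

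The existence half, however, has a genuine gap, and it sits exactly where you predicted. The operator $\partialf\,\Box^{-1}\df^\nabla$ does \emph{not} strictly shift filtration degree: writing $\df^\nabla=\partial+(\text{degree-raising terms})$ in a scale, its degree-preserving part is $\partialf\,\Box^{-1}\partial$, and on $\Gamma(\im(\partialf))$ this equals the \emph{identity}, since $\Box^{-1}$ commutes with $\partialf$ and $\partialf\partial=\Box$ on $\ker(\partialf)$. Hence your Neumann series does not terminate, and the resolvent $(\id+\partialf\,\Box^{-1}\df^\nabla)^{-1}\tilde\alpha$ computes the wrong section. Concretely, for $\mathcal V=\cT^*$ and $p=0$ one finds $(\id+\partialf\,\Box^{-1}\df^\nabla)(\omega_a,\sigma)=(2\omega_a-\nabla_a\sigma,\ \sigma)$, so your formula returns $(\tfrac12\nabla_a\sigma,\ \sigma)$ rather than the correct $\cL_0(\sigma)=(\nabla_a\sigma,\ \sigma)$ from the paper's table in Section 6; the former visibly fails the third condition, since $\partialf\df^\nabla(\tfrac12\nabla_a\sigma,\sigma)=(\tfrac12\nabla_a\sigma,0)\neq 0$. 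The repair is to move the inversion onto $\Gamma(\im(\partialf))$: seek $\cL_p(\alpha)=\tilde\alpha+t$ with $t\in\Gamma(\im(\partialf))$, note that $\partialf\df^\nabla$ restricted to $\Gamma(\im(\partialf))$ equals $\Box+N$ with $N$ \emph{genuinely} degree-raising (because $\partialf t=0$ kills the problematic term), invert it by the finite series $\sum_{j\ge0}(-\Box^{-1}N)^j\Box^{-1}$, and set $t=-(\Box+N)^{-1}\partialf\df^\nabla\tilde\alpha$. With that substitution the rest of your argument (preservation of the first two conditions, projective invariance, and your uniqueness induction) goes through.
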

We then obtain, by construction, the invariant differential operators
$$
{\mathcal D}_{p}:=\pi\circ \df^\nabla\circ \mathcal{L}_p:{H}_p({\mathcal V})\to {H}_{p+1}({\mathcal V})\, ,$$
and these operators form the BGG sequence:
\begin{diagram}
\scalebox{.7}{${\mathcal V}^\bullet$}&\rTo^{\dfs^\nabla} &
\scalebox{.7}{$\Wedge^1({\mathcal V}^\bullet)$}&\rTo^{\dfs^\nabla}
&\cdots\\&&&&\\ \uTo_{\mathcal{L}_0} & &\dTo{\pi}\uTo_{\mathcal{L}_1
}&&\dTo{\pi}\uTo_{\mathcal{L}_2
}\\ \mathcal{B}^0&\rTo^{\,\,\,\,\,\,\,{\mathcal D}_0\,\,\,\,\,\,\,} &
\mathcal{B}^1&\rTo^{\,\,\,\,\,\,\,{\mathcal
    D}_1\,\,\,\,\,\,\,}&\cdots \end{diagram}
 When the geometry is
flat, in the Cartan sense, this sequence becomes a complex.

In this section we sketched the construction of the BGG sequences for
projective geometries; the construction is quite similar for general
parabolic geometries.

\subsection{Yang--Mills detour complexes}\label{YMdet}
On a pseudo-Riemannian manifold, a basic detour complex
arises from the de Rham complex and the Maxwell operator $\deltaf \df$, where~$\deltaf$ is the formal adjoint of the exterior derivative
$\df$. 
\begin{equation}\label{max}
0\xrightarrow{\,\,\, \dfs \,\,\,}\Wedge^0M\xrightarrow{\,\,\, \dfs \,\,\,}\cdots \xrightarrow{\,\,\, \dfs \,\,\,} \Wedge^p(M)\xrightarrow{\,\,\,\,\,\,\deltafs \dfs\,\,\,\,\,\,}    \Wedge^p(M) \xrightarrow{\,\,\, \deltafs \,\,\,} \cdots\xrightarrow{\,\,\, \deltafs \,\,\,}  \Wedge^0M\xrightarrow{\,\,\, \deltafs \,\,\,}0 .
\end{equation}
In particular if $n$ is even and $p=n/2-1$ then this complex is
conformally invariant (at least after the introduction of conformally
weighted bundles for the right hand side of the complex). This is the simplest case 
of the family of (in general higher order) complexes found in~\cite{BG1}. 

If we drop the requirement of conformal invariance then the complex
\nn{max} is available in both dimension parities and for all
$p$-forms. In particular, if $p=1$, we obtain the complex 
\begin{equation}\label{maxd}
0\to \Wedge^0(M)\stackrel{\dfs\  }{\longrightarrow}\Wedge^1(M) \stackrel{\ \deltafs \dfs\ \ }{\longrightarrow}\Wedge^1(M)\stackrel{\deltafs\  }{\longrightarrow} \Wedge^0(M) \to 0\, ,
\end{equation}
which, in the spirit of discussion above, encodes the gauge theory of classical
source-free electromagnestism.

Now consider twisting the Maxwell detour \nn{maxd} with a connection
$\nabla$ on some vector bundle $\cV$ and replacing the exterior and
interior derivatives by~$\df^{\nabla}$,~$\deltaf^{\nabla}$ twisted by the
connection on~$\cV$.
This is in general doomed to failure as curvature means that~$\deltaf^{\nabla}\circ\,  \df^{\nabla}\, \circ {\df^{\nabla}} $ is not zero. However, as shown in~\cite{GSS}, there is a useful
variant on this first idea as follows.

\newcommand{\C}{\mathcal{C}}
\newcommand{\DD}{\mathcal{D}}

Suppose that the connection has curvature $\cF$.  We define its
action on $\cV$-valued differential forms by
$$
\begin{array}{rccc}
{\rm End}(\cF^\sharp) :& \Wedge^1(\cV)&\longrightarrow&\Wedge^{\, 1}(\cV)
\\[1mm]&\rotatebox{90}{$\in$}&&\rotatebox{90}{$\in$}\\
&\Xi_a^{\,\,\,\C}&\longmapsto& \cF_{a\,\,\,\,\,\,\,\DD}^{\,\,\,\, b\C}\,\Xi_b^{\,\,\,\DD}
\end{array}
$$ 
where $\C$, $\DD$ are abstract indices for the bundle $\cV$. We will often  
simply write ${\rm End}(\cF^\sharp) \Xi_a{}^{\C}$ for image of this map acting 
on $\Xi_a{}^{\C}$.

We construct now the operator
$$
\begin{array}{rccc}
\boldsymbol{M}^\nabla :& \!\!\!\Wedge^1(\cV)&\longrightarrow&\!\!\!\!\Wedge^{\, 1}(\cV)\\[1mm]&\rotatebox{90}{$\in$}&&\!\!\!\rotatebox{90}{$\in$}\\
&\Xi_a^{\,\,\,C}&\longmapsto&\!\!\!\left(\deltaf^{\nabla}\! \df^{\nabla}-{\rm End}(\cF^\sharp) \right)\Xi_a^{\,\,\,C}\, ,
\end{array}
$$
which leads to the following theorem:
\begin{theorem}(See \cite{GSS}.)  \label{4.2}
The sequence of operators 
$$
0\longrightarrow  \Wedge^0(\cV) \xrightarrow{\,\,\, \dfs^{\nabla} \,\,\,} \Wedge^1(\cV)\xrightarrow{\,\,\,\,\,\deltafs^{\nabla} \dfs^{\nabla}-{\rm End}(\cF^\sharp)\,\,\,\,\,}\Wedge^1(\cV)\xrightarrow{\,\,\, \deltafs^{\nabla} \,\,\,}\Wedge^0(\cV)\longrightarrow 0\, ,
$$ is a complex if and only if the curvature~$\cF$ satisfies the Yang--Mills equation \begin{equation}\label{YM}\deltaf^{\nabla} \cF=0\, .\end{equation} If $\nabla$ preserves a
metric on $\cV$ then the complex is formally self-adjoint.
\end{theorem}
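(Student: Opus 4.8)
**

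The plan is to verify the complex property directly by composing the operators in sequence and showing that each consecutive composition vanishes precisely when the Yang-Mills equation \nn{YM} holds. The sequence has three nontrivial compositions to check. First I would verify that $\boldsymbol{M}^\nabla\circ \df^\nabla=0$. Expanding, this is $\big(\deltaf^\nabla \df^\nabla-{\rm End}(\cF^\sharp)\big)\df^\nabla$ acting on a section of $\Wedge^0(\cV)$. The key identity here is that $\df^\nabla\df^\nabla=\cF\wedge(\cdot)$ measures the curvature, so the term $\deltaf^\nabla\df^\nabla\df^\nabla$ produces $\deltaf^\nabla$ applied to a curvature-wedge expression; I expect this to cancel against the ${\rm End}(\cF^\sharp)\df^\nabla$ term up to a contribution proportional to $(\deltaf^\nabla\cF)$, which vanishes exactly under \nn{YM}. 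The second composition, $\deltaf^\nabla\circ\boldsymbol{M}^\nabla$, is handled dually: $\deltaf^\nabla\big(\deltaf^\nabla\df^\nabla-{\rm End}(\cF^\sharp)\big)$ gives $\deltaf^\nabla\deltaf^\nabla\df^\nabla-\deltaf^\nabla{\rm End}(\cF^\sharp)$, and again $\deltaf^\nabla\deltaf^\nabla$ generates a curvature term to be matched against the divergence of $\cF$.

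The cleanest route is to exploit the formal adjointness structure rather than grinding through index manipulations twice. I would first observe that when $\nabla$ preserves a metric on $\cV$, the operator $\boldsymbol{M}^\nabla$ is formally self-adjoint: $\df^\nabla$ and $\deltaf^\nabla$ are mutually adjoint by definition, so $\deltaf^\nabla\df^\nabla$ is self-adjoint, and ${\rm End}(\cF^\sharp)$ is self-adjoint because the curvature of a metric connection is skew in its $\cV$-indices (valued in the orthogonal/skew endomorphisms) while the form index is contracted symmetrically. Granting this, the second composition $\deltaf^\nabla\boldsymbol{M}^\nabla$ is simply the formal adjoint of the first composition $\boldsymbol{M}^\nabla\df^\nabla$, so it suffices to establish the vanishing of one of them; the other follows by taking adjoints, and the final clause of the theorem (formal self-adjointness of the whole complex) is then immediate from the symmetry of the arrangement of operators about the central slot.

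The main obstacle is the computation $\boldsymbol{M}^\nabla\circ\df^\nabla=0$, which requires carefully relating the second-order piece $\deltaf^\nabla\df^\nabla\df^\nabla$ to the zeroth-order endomorphism $\mathrm{End}(\cF^\sharp)\df^\nabla$. Concretely, on a function $f\in\Wedge^0(\cV)$ one has $\df^\nabla\df^\nabla f=\cF f$ as a $\cV$-valued two-form, so the critical step is to commute $\deltaf^\nabla$ past this curvature two-form. Using the identity $\deltaf^\nabla(\cF\wedge\alpha)=(\deltaf^\nabla\cF)\wedge\alpha\pm\cF\hook(\text{something})\pm{\rm End}(\cF^\sharp)\alpha$—where the interior-contraction term reproduces exactly the $\mathrm{End}(\cF^\sharp)\df^\nabla$ piece—one sees the two second-order contributions cancel, leaving only the term proportional to $\deltaf^\nabla\cF$. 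This is where the Yang-Mills equation enters decisively, and getting the signs and the precise form of this divergence identity correct is the delicate part of the argument; everything else is bookkeeping or follows by the adjointness observation above.
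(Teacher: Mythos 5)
Your proposal follows essentially the same route as the paper, whose proof simply records the two identities $\boldsymbol{M}^\nabla\df^\nabla=\varepsilon(\deltaf^\nabla\cF)$ and $\deltaf^\nabla\boldsymbol{M}^\nabla=-\iota(\deltaf^\nabla\cF)$ as ``a short calculation'' and then reads off both the if-and-only-if and the formal self-adjointness exactly as you do; the curvature/contraction bookkeeping you flag as the delicate step is precisely that calculation. One small caution: your shortcut of obtaining the second composition as the formal adjoint of the first is only available when $\nabla$ preserves a metric on $\cV$, whereas the complex-iff-Yang--Mills clause is asserted for general $\nabla$, so in that generality you must carry out the dual computation you sketch rather than appeal to adjointness.
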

\begin{proof}
A short calculation shows that 
\begin{equation}\label{ymc}
\bm{\!M\!\!}^{\!\nabla}\!  \df^\nabla = \varepsilon (\!\deltaf^{\nabla} \cF) \quad \mbox{and} \quad
\deltaf^\nabla \! \bm{\!M\!\!}^{\!\nabla} = -\iota (\!\deltaf^{\nabla} \cF)  \, ,  
\end{equation}
where $\iota$ and $\varepsilon$ denote interior and exterior multiplication. Finally, in the case that $\nabla$ preserves a metric, 
the operator~${\!\bm{M}\!\!}^{\nabla}$ is formally self-adjoint by construction. Then, also by construction, the sequence is formally self-adjoint. 
\end{proof}
Following~\cite{GSS}, we call the sequence of the theorem
the {\em Yang-Mills detour complex} and say a connection is {\it Yang--Mills} when it obeys the source-free Yang-Mils equation~\eqref{YM}. There is such a complex for every
Yang-Mills connection,  but our interest here is that some of these
can be used to generate other interesting complexes. A first
observation in this direction is that if a connection on a vector
bundle $\cU$ is Yang-Mills, then so are the induced connections on the
dual of $\cU$, on tensor powers of these and on tensor parts
thereof. So in each case there is such a complex. We wish to combine
this observation with the next less obvious construction.

\subsection{Translating}\label{trans}
Consider the general situation
\begin{equation}\label{cdi}
\begin{diagram}
&\Wedge^0(\cV) &\rTo^{\dfs^\nabla} &\Wedge^1(\cV)&\rTo^{\Ms^\nabla} 
&\Wedge^1(\cV)&\rTo^{\,\,\,\deltafs^\nabla\,\,\,} &\Wedge^0(\cV) \\
&\uTo_{\mathcal{L}_0} & &\uTo_{\mathcal{L}_1 }&&\dTo^{\mathcal{L}^1}&&\dTo^{\mathcal{L}^0}\\
&E&\rTo^{\,\,\,\,\,\,\,\mathcal D\,\,\,\,\,\,\,} &F &\rTo^{\,\,\,\,\,\,\,M\,\,\,\,\,\,\,} &F^\star&\rTo^{\,\,\,\,\,\,\,\mathcal D^\star\,\,\,\,\,\,\,}&E^\star
\end{diagram}
\end{equation}
where $\cD:E\to F$ and $\cD^\star: F^\star\to E^\star$, $\cL_0:E\to
\Wedge^0(\cV)$, $\cL_1:F\to \Wedge^1(\cV)$, $\cL^1: \Wedge^1(\cV)\to
F^\star$, and $\cL^0:\Wedge^0(\cV)\to E^\star$ are differential operators and 
$M$ is defined to be the composition 
$$
\cL^1{\!{\bm M}\!\!}^\nabla \cL_1:F\to F^\star.
$$

Suppose now that the left and right squares are commutative:
$$
\df^{\!\nabla}\mathcal{L}_0=\mathcal{L}_1{\mathcal D}\,\,\,\,\,\textrm{and}\,\,\,\,\,{\mathcal D}^\star\mathcal{L}^1=\mathcal{L}^0\deltaf^\nabla\, .
$$
Then it follows that
$$
\begin{array}{rcl}
M{\mathcal D}&=&\mathcal{L}^1\epsilon(\deltaf^\nabla \Omega)\mathcal{L}_0\, ,\\[3mm]
{\mathcal D}^\star M&=&-\mathcal{L}^0\iota(\deltaf^\nabla \Omega)\mathcal{L}_1\, .
\end{array}
$$ 
Thus, if the connection is  Yang--Mills
(\emph{i.e.},~$\deltaf^\nabla \cF=0$), it follows at once from Theorem
~\ref{4.2} that the lower differential sequence 
$$
E \xrightarrow{\,\,\, {\mathcal D} \,\,\,}F \xrightarrow{\,\,\,\,\,M \,\,\,\,\,}F^\star \xrightarrow{\,\,\, {\mathcal D}^\star \,\,\,}E^\star\, ,
$$ is a complex. Furthermore, observe that if $E$ and $F$ are tensor
bundles and also the connection $\nabla$ preserves a metric on $\cV$,
then we can construct the last square by taking adjoints of all the
differential operators from the first square. In this case the
commutativity of the last square is immediate from commutativity of
the first square, and by using once more Theorem~\ref{4.2} it follows
that the lower sequence is formally self-adjoint.

In summary we have recovered the following paraphrasing of a result
from~\cite{GSS}:
\begin{theorem} \label{4.2+} Suppose that the connection $\nabla$ on $\cV$ is metric and Yang-Mills. Suppose also that the first square of the diagram \nn{cdi} commutes. Then we obtain  a formally self-adjoint detour complex 
\begin{equation}\label{detd}
 0\longrightarrow E \xrightarrow{\,\,\, {\mathcal D} \,\,\,}F
\xrightarrow{\,\,\,\,\,M \,\,\,\,\,}F^\star \xrightarrow{\,\,\, {\mathcal
    D}^\star \,\,\,}E^\star \longrightarrow 0 .
\end{equation}
\end{theorem}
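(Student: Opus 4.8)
The plan is to realize the statement as a formal consequence of Theorem~\ref{4.2}, the two composite identities displayed just above it, and an adjointness argument that supplies the right-hand square of~\nn{cdi} automatically. Since $\cV$ carries a $\nabla$-parallel metric and $E,F$ are tensor bundles, formal adjoints of all the operators are defined; so I would \emph{define} $F^\star,E^\star$ to be the (suitably weighted) duals of $F,E$ and set $\mathcal{L}^1:=(\mathcal{L}_1)^*$, $\mathcal{L}^0:=(\mathcal{L}_0)^*$ and $\mathcal D^\star:=\mathcal D^*$, all formal adjoints taken with respect to the fibre metric on $\cV$ and the pseudo-Riemannian volume, keeping $M:=\mathcal{L}^1\boldsymbol{M}^\nabla\mathcal{L}_1$ as in the statement.

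First I would check that the right square commutes for free. Taking the formal adjoint of the hypothesised left-square identity $\df^\nabla\mathcal{L}_0=\mathcal{L}_1\mathcal D$, reversing the order of composition and using that $\deltaf^\nabla=(\df^\nabla)^*$, gives $\mathcal D^\star\mathcal{L}^1=\mathcal{L}^0\deltaf^\nabla$, which is exactly the commutativity of the right square. The complex property is then a short chase: substituting the left-square relation gives $M\mathcal D=\mathcal{L}^1\boldsymbol{M}^\nabla\df^\nabla\mathcal{L}_0=\mathcal{L}^1\varepsilon(\deltaf^\nabla\cF)\mathcal{L}_0$ by the first identity of~\nn{ymc}, and dually $\mathcal D^\star M=-\mathcal{L}^0\iota(\deltaf^\nabla\cF)\mathcal{L}_1$; the Yang--Mills equation~\nn{YM}, $\deltaf^\nabla\cF=0$, annihilates both right-hand sides, so $M\mathcal D=0=\mathcal D^\star M$ and~\nn{detd} is a complex.

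For formal self-adjointness I would invoke Theorem~\ref{4.2}, by which $\boldsymbol{M}^\nabla$ is formally self-adjoint in the metric case, whence $M^*=(\mathcal{L}_1)^*(\boldsymbol{M}^\nabla)^*(\mathcal{L}^1)^*=\mathcal{L}^1\boldsymbol{M}^\nabla\mathcal{L}_1=M$, while $\mathcal D^\star=\mathcal D^*$ holds by construction; thus the sequence~\nn{detd} coincides with its own formal adjoint read in the reverse direction. The step I expect to carry the real content is the adjointness bookkeeping underlying these moves: one must confirm that the duals $F^\star,E^\star$ and their projective weights line up so that $\deltaf^\nabla=(\df^\nabla)^*$ applies on the relevant form-degrees, and that the pairing making $\boldsymbol{M}^\nabla$ self-adjoint is the same one used throughout. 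This is precisely where the hypotheses that $\nabla$ be metric and that $E,F$ be tensor bundles are used; everything else is purely diagrammatic, reusing~\nn{ymc}.
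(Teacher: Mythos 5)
Your proposal is correct and follows essentially the same route as the paper: the paper's Section~\ref{trans} likewise obtains the right-hand square of~\nn{cdi} by taking formal adjoints of the operators in the (hypothesised commuting) left-hand square, derives $M\mathcal D=\mathcal{L}^1\varepsilon(\deltaf^\nabla\cF)\mathcal{L}_0$ and $\mathcal D^\star M=-\mathcal{L}^0\iota(\deltaf^\nabla\cF)\mathcal{L}_1$ from~\nn{ymc}, and concludes via the Yang--Mills condition and the formal self-adjointness of $\boldsymbol{M}^\nabla$ from Theorem~\ref{4.2}. Your extra care about matching weights and pairings is a sensible elaboration of the paper's remark that $E,F$ being tensor bundles and $\nabla$ being metric is exactly what makes the adjoint bookkeeping work.
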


In the spirit of the discussion in Section ~\ref{gaugeS},
we can use the construction leading to Theorem ~\ref{4.2+} as a tool
for generating equations on motion $M$ operators on potentials in $F$. Then, also by
construction, these are invariant with respect to the gauge
transformations $\cD: E\to F$, and satisfy Bianchi identities
$\cD^\star:F^\star\to E^\star$.

\subsection{BGG detour complexes} \label{BGG-detour}
We use the term {\em BGG detour complex} to mean detour complexes
which use, in part, operators from the BGG complexes. 

In the following we will be, in particular, interested in using Theorem
~\ref{4.2+} to generate BGG detour complexes of the form \nn{detd},
where $\cD$ is a first BGG operator. In these constructions, suitable
projective tractor bundles will play the {\it r\^ole} of $\cV$, and we recall
that, in the case the projective structure is Einstein and non
Ricci-flat, these have a metric preserved by the tractor
connection.

\section{Partially massless models of maximal depth}\label{pmm}

Here, on non-flat constant curvature backgrounds, we show, using
the Yang-Mills detour theory, that for any spin $k\in \mathbb{Z}_{\geq 2}$,
there is a PM  gauge invariant equation of motion and
gauge invariant constraint system for fields in $\odot^kT^*M$. The gauge operators 
are  given by the restriction to constant curvature manifolds  of 
the projectively invariant first BGG operators 
\begin{equation*}
{\mathcal D}: \ce(k-1) \to \odot^kT^*M(k-1)\, .   
\end{equation*}
These linear operators take the form 
$$
{\mathcal D}(\si)= \nabla_{(a}\cdots \nabla_{c)}\si + \mbox{ lower order terms} .
$$ General algorithms for the explicit formulae for these are
available in~\cite{GEsigma}. In fact these formulae are members of general
families that take the same form on all parabolic geometries~\cite{Cald-Sou,Go89}. The examples of order three and  two are given in, respectively, 
\nn{spin3proj} and \nn{1BGGspin2} below.

\begin{theorem}\label{Bigtheorem}
Let $k$ be a positive integer. 
On a constant curvature manifold the Yang-Mills detour complex
associated with the projective tractor connection determines a
canonical formally self-adjoint detour complex 
$$
0\to \ce \stackrel{\ {\mathcal D}\ }{\longrightarrow} \odot^kT^*M  \stackrel{\ M\ }{\longrightarrow} \odot^kT^*M \stackrel{\ {\mathcal D}^\star\ }{\longrightarrow} \ce \to 0 ,
$$ where ${\mathcal D}^\star$ is the (order $k$) adjoint of the operator ${\mathcal D}$ and
the equation of motion operator~$M$ is second order.
\end{theorem}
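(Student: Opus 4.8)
The plan is to invoke the general machinery of Theorem~\ref{4.2+} with a carefully chosen tractor bundle $\cV$, and to show that the abstract objects $E,F,\mathcal{D}$ appearing in diagram~\nn{cdi} specialize to $\ce$, $\odot^kT^*M$, and the first BGG operator $\mathcal{D}$ named in the statement. First I would identify the correct $Sl(n+1)$ representation $\mathbb{V}$: since $\mathcal{D}:\ce(k-1)\to\odot^kT^*M(k-1)$ is the first operator of a BGG sequence whose leading symbol is $\nabla_{(a}\cdots\nabla_{c)}$, the bundle $\cB^0={H}_0(\cV)$ must be $\ce(k-1)$ and $\cB^1={H}_1(\cV)$ must be $\odot^kT^*M(k-1)$; this fixes $\mathbb{V}$ as the appropriate symmetric tractor power (the representation whose Young diagram is a single row of length $k$, suitably weighted). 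With $\cV$ so chosen, $E=\ce$, $F=\odot^kT^*M$, and $\mathcal{D}$ is exactly the operator in the theorem.

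**Verifying the hypotheses of Theorem~\ref{4.2+}.** The two hypotheses I must check are that the tractor connection $\nd^\cT$ on $\cV$ is (i) metric and (ii) Yang--Mills, and that the left square of \nn{cdi} commutes. For (ii), the computation at the end of Section~\ref{EPS} already shows $\nabla^a\Omega_{ab}{}^C{}_D=0$ on an Einstein manifold, hence the tractor connection is Yang--Mills; on a constant curvature background this is immediate, and the induced connection on any tensor part of $\cV$ inherits this property by the remark following Theorem~\ref{4.2}. For (i), on a non-Ricci-flat Einstein (in particular constant curvature) space, Proposition~\ref{pM} furnishes a parallel non-degenerate $H^{AB}\in\odot^2\cT$, which is precisely a tractor metric preserved by $\nd^\cT$; this induces a metric on every tensor power and tensor part, so $\cV$ carries a parallel metric. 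Commutativity of the left square, $\df^\nabla\mathcal{L}_0=\mathcal{L}_1\mathcal{D}$, is the defining property of the BGG splitting operator $\mathcal{L}_0$ together with the identity $\mathcal{D}=\pi\circ\df^\nabla\circ\mathcal{L}_0$ from Section~\ref{bggS}. Granting these, Theorem~\ref{4.2+} immediately yields the formally self-adjoint detour complex with $M=\mathcal{L}^1\,{\!\bm M\!\!}^\nabla\mathcal{L}_1$, and $\mathcal{D}^\star$ arises as the formal adjoint of $\mathcal{D}$ by taking adjoints of all operators in the first square, exactly as described in Section~\ref{trans}.

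**The order count.** The remaining assertions are that $\mathcal{D}^\star$ has order $k$ and, crucially, that $M$ is \emph{second} order. The order of $\mathcal{D}^\star$ is clear since it is the formal adjoint of the order-$k$ operator $\mathcal{D}$. The order of $M$ is the genuinely delicate point. Naively $M=\mathcal{L}^1\,{\!\bm M\!\!}^\nabla\mathcal{L}_1$ is a composition of a splitting operator $\mathcal{L}_1$ (order $k-1$), the second-order operator ${\!\bm M\!\!}^\nabla=\deltaf^\nabla\df^\nabla-{\rm End}(\cF^\sharp)$, and its adjoint $\mathcal{L}^1$ (order $k-1$), which would suggest order $2k$. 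The expected mechanism for the drop to order two is that on a constant curvature background the tractor connection $\nd^\cT$ coupled to $\cV$ has, up to the algebraic curvature term, a structure for which the higher-order parts of the splitting operators are annihilated. Concretely I would argue that the parallel tractor metric lets one trivialize the filtration and that ${\!\bm M\!\!}^\nabla$ acting on the image of $\mathcal{L}_1$ collapses to a second-order operator on the leading slot $\odot^kT^*M$, with all potential higher derivatives cancelling against the ${\rm End}(\cF^\sharp)$ term and the algebraic relations encoded in the Kostant codifferential $\partialf$.

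**Main obstacle.** The hard part will be precisely this order-reduction argument for $M$: showing that the a priori order-$2k$ composition is in fact second order. I expect this to require an explicit analysis in a chosen (Einstein/constant-curvature) scale, tracking how $\deltaf^\nabla\df^\nabla$ interacts with the components of $\mathcal{L}_1$ across the tractor slots and invoking the covariant constancy of $H^{AB}$ together with the constant-curvature form of $\Omega$. The representation-theoretic bookkeeping---identifying $\mathbb{V}$ and confirming $\cB^0=\ce(k-1)$, $\cB^1=\odot^kT^*M(k-1)$ via the $\frak{sl}(n+1)$ grading and Kostant homology---is routine but must be done carefully to fix normalizations; the substantive content is the second-order claim for $M$.
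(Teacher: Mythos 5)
Your overall strategy is the paper's: take $\cV$ to be a symmetric power of the cotractor bundle (here $\odot^{k-1}\cT^*$), check that the tractor connection is metric and Yang--Mills on a non-flat constant curvature background, establish the commuting first square, and invoke Theorem~\ref{4.2+}. But the step you yourself flag as ``the substantive content'' --- that $M$ is second order --- is left unproved, and the mechanism you propose for it is not the right one. You assume $\cL_1$ is a differential operator of order $k-1$ and then hope that the resulting a priori order-$2k$ composition collapses to order two by cancellations against ${\rm End}(\cF^\sharp)$. In fact the key observation is that $\cL_1$ has order \emph{zero}: from the composition series \nn{comp}, the bundle $\odot^kT^*M(k-1)$ occurs exactly once in $\Wedge^1(\odot^{k-1}\cT^*)$, namely as a summand of the bottom filtration component, so there is a projectively invariant bundle inclusion $\imath:\odot^kT^*M(k-1)\to\Wedge^1(\odot^{k-1}\cT^*)$, and the characterising properties of the splitting operator force $\cL_1=\imath$. (Compare the spin three computation in Section~\ref{s=3CC}, where $\cL_1$ simply inserts $\psi_{abc}$ into the bottom slot.) Once this is in hand, $M=\cL^1\circ\boldsymbol{M}^\nabla\circ\cL_1$ is a composition of two bundle maps with the second-order operator $\boldsymbol{M}^\nabla$, and the order claim is immediate; no cancellation argument is needed. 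It is $\cL_0$, not $\cL_1$, that has order $k-1$, and that order is absorbed into ${\mathcal D}$ and ${\mathcal D}^\star$.

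The same observation is what is missing from your treatment of the commutativity of the first square. The identities $\partialf\df^\nabla\cL_0=0$ and ${\mathcal D}=\pi\circ\df^\nabla\circ\cL_0$ only tell you that $\df^\nabla\cL_0\sigma$ and $\cL_1{\mathcal D}\sigma$ both lie in $\ker(\partialf)$ and have the same image under $\pi$; their difference lies in $\im(\partialf)$ and need not vanish --- this failure is precisely why curved BGG sequences are generally not complexes, so commutativity cannot be ``the defining property'' of the splitting operators. The paper closes this gap by working on a projectively flat structure and using the classification of projectively invariant operators on the sphere: ${\mathcal D}$ is the unique (up to scale) invariant operator out of $\ce(k-1)$, and since $\odot^kT^*M(k-1)$ occurs only once in the composition series of $\Wedge^1(\odot^{k-1}\cT^*)$, the invariant operator $\df^\nabla\circ\cL_0$ must factor as $\imath\circ{\mathcal D}$. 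You need an argument of this kind (or the alternative in the paper's remark, via commuting Thomas $D$-operators and $X^{A_1}D_{A_1}\cdots D_{A_k}\sigma=0$) before Theorem~\ref{4.2+} can be applied. A minor further point: the relevant representation is the $(k-1)$-st symmetric power of the cotractor representation, not the $k$-th.
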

\begin{proof}
We work first on any projectively flat structure $(M,\boldsymbol{p})$.
 
There is a projectively invariant BGG splitting operator
$$
\cL_0:\ce(k-1)\to \odot^{k-1}\cT^* .
$$ Thus the composition $\df^{\nd} \cL_0:\ce(k-1)\to
\Wedge^1(\odot^{k-1}\cT^*) $ is also projectively invariant.  
Next, by the standard BGG theory
$H_1(\odot^{k-1}\cT^*)\cong \odot^kT^*M(k-1)$ 
(and we identify these spaces),
$\partialf \circ \df^{\nd}\circ  \cL_0  =0 $, and the first BGG operator above
is given  by $\pi\circ  \df^{\nd}\circ  \cL_0 $, where $\pi$ is the bundle 
map arising from the map from $\ker (\partialf)$ to homology.

In fact we can say more. First we observe that, from the composition
series for the tractor bundle, there is a projectively invariant
bundle inclusion
\begin{equation}\label{ins} \textstyle
\imath : \odot^kT^*M(k-1)\to \bw^1(\odot^{k-1}\cT^*),
\end{equation}
and so it follows from the characterising properties of $\cL_1$ that 
$\cL_1=\imath$. But then it follows that 
\begin{equation}\label{comm}
 \df^{\nd}\circ  \cL_0 = \cL_1 \circ {\mathcal D} .
\end{equation}
This last result holds because, from the classification of
 projectively invariant differential operators on the sphere, there is
 only one, up to multiplication by a non-zero constant, projectively
 invariant operator on $\ce(k-1)$, and this is the operator ${\mathcal D}$. On the
 other hand, the bundle $\odot^kT^*M(k-1)$ occurs only once in the
 composition series for $\bw^1(\odot^{k-1}\cT^*)$, and this is realised by
 $\imath$.

This establishes a commuting first square, as in the diagram \nn{cdi}
leading to Theorem~\ref{4.2+}. Now we restrict to a constant curvature
manifold $(M,g)$, with $\Lambda\neq 0$. This has the projective
structure $\boldsymbol{p}=[\nabla^g]$ and we note that $\p$ is, in
particular, projectively flat, so the above results are
available. Furthermore, in this setting the manifold is Einstein so we
may use $\tau$ (see~\eqref{keysc})  to trivialise density bundles and there is a parallel
tractor metric given by~\nn{tractormetric}. This metric induces a metric on
$\odot^{k-1}\cT^*$. Thus we may write adjoints for all of the operators in the
first square and so obtain a commuting last square, as in the diagram~\nn{cdi}. Furthermore using this metric, the operator $\boldsymbol{M}^\nabla$ is
formally self adjoint and so the claimed system follows from Theorem~\ref{4.2+}.
\end{proof}

By construction then, the operator $M:\odot^kT^*M \to \odot^kT^*M $ is gauge
invariant with respect to the order $k$ gauge operator ${\mathcal D}$, while
${\mathcal D}^\star: \odot^kT^*M\to \ce $ provides the integrability conditions
({\it i.e.} ``Bianchi identities'') for this.

\begin{remark} 
For those familiar with tractor calculus, there is an 
alternative approach to  aspects of the proof above.  For example,
in the projectively flat setting (as in the theorem) it is easily
verified that the splitting operator $\cL_0:\ce(k-1)\to \odot^{k-1}\cT^*$, is
given explicitly by
 $$
 \si\mapsto D_{A_1}\cdots D_{A_{k-1}}\si\, ,
$$
  where $D_A$ is the projective Thomas $D$-operator.  The
  right-hand-side here is symmetric since on projectively flat
  manifolds the $D$-operators mutually commute. Then
  $$X^{A_1}D_{A_1}\cdots D_{A_{k-1}}D_{A_k}\si =0\, , $$ 
  and from this it
    follows  that \nn{comm} holds.
\end{remark}

\subsection{Constraints} In our current context, the detour 
construction leading to Theorem ~\ref{4.2+} encodes more than the
detour complex of Theorem ~\ref{Bigtheorem}. As well as  the Bianchi
identity on the equation of motion operator $M$ of the Yang-Mills detour complex, it also
captures gauge invariant constraints on the potential.
\begin{theorem} \label{contraintSys}
Applied to give equations on the potential, the differential operator 
\begin{equation}\label{coneq}
\textstyle 
\boldsymbol{M}^\nabla\!\circ \cL_1:\odot^kT^*M\to \bw^1 (\odot^k\cT^*), 
\end{equation}
extends the equation of motion of the BGG detour operator $M: \odot^kT^*M
\stackrel{\ M\ }{\longrightarrow} \odot^kT^*M$ (of Theorem ~\ref{Bigtheorem})
by a system of gauge invariant constraints.

There are relations between these constraints and the equations of
motion captured by the fact that image of $\boldsymbol{M}^\nabla\circ
\cL_1$ lies in the kernel of $\deltaf^\nabla$.
\end{theorem}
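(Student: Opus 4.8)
The plan is to extract both assertions directly from the two Yang--Mills identities \nn{ymc} together with the commuting first square \nn{comm}, and then to interpret the resulting operator through the composition series of its target. Throughout I take $\cV=\odot^{k-1}\cT^*$ with the connection induced by the tractor connection, and I work in the Einstein scale $\tau$ of the constant curvature background; there the tractor connection satisfies $\nabla^a\Omega_{ab}{}^C{}_D=0$ (Section~\ref{EPS}), and this Yang--Mills property passes to the induced connection on $\odot^{k-1}\cT^*$ (as noted after Theorem~\ref{4.2}), so that $\deltaf^\nabla\cF=0$ on $\cV$.

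First I would establish gauge invariance of the \emph{entire} operator $\boldsymbol{M}^\nabla\circ\cL_1$, not merely of its equation-of-motion component. Since the potential transforms as $\varphi\mapsto\varphi+\mathcal{D}\sigma$ with $\sigma\in\ce(k-1)$, it is enough to show $\boldsymbol{M}^\nabla\cL_1\mathcal{D}=0$. Using $\df^\nabla\cL_0=\cL_1\mathcal{D}$ from \nn{comm} I rewrite this as $\boldsymbol{M}^\nabla\df^\nabla\cL_0$, and the first identity of \nn{ymc}, $\boldsymbol{M}^\nabla\df^\nabla=\varepsilon(\deltaf^\nabla\cF)$, then gives $\boldsymbol{M}^\nabla\cL_1\mathcal{D}=\varepsilon(\deltaf^\nabla\cF)\,\cL_0=0$. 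This is strictly stronger than the gauge invariance already present in Theorem~\ref{Bigtheorem}, where $M\mathcal{D}=\cL^1\boldsymbol{M}^\nabla\cL_1\mathcal{D}=0$ controls only the homology slot; here every component, constraints included, is annihilated on gauge orbits. The dual computation yields the second assertion: the second identity of \nn{ymc}, $\deltaf^\nabla\boldsymbol{M}^\nabla=-\iota(\deltaf^\nabla\cF)$, composed with $\cL_1$ gives $\deltaf^\nabla\boldsymbol{M}^\nabla\cL_1=-\iota(\deltaf^\nabla\cF)\,\cL_1=0$, so $\im(\boldsymbol{M}^\nabla\circ\cL_1)\subseteq\ker(\deltaf^\nabla)$; these $\Wedge^0(\cV)$-valued identities are precisely the claimed relations among the component equations.

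It then remains to read $\boldsymbol{M}^\nabla\circ\cL_1$ as the equation of motion $M$ extended by constraints. The target $\Wedge^1(\cV)=T^*M\otimes\odot^{k-1}\cT^*$ is filtered, and its composition series contains the homology $H_1(\cV)\cong\odot^kT^*M$, the slot into which $\imath=\cL_1$ maps by \nn{ins}. By the very definition of $M$ in the diagram \nn{cdi}, applying the adjoint splitting operator $\cL^1$ to $\boldsymbol{M}^\nabla\cL_1\varphi$ returns $M\varphi$, so the system $\boldsymbol{M}^\nabla\cL_1\varphi=0$ contains the field equation $M\varphi=0$ while the remaining slots, annihilated by $\cL^1$, furnish the constraints; the two computations above show the full system is gauge invariant and that its components are tied together by $\deltaf^\nabla$. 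The main obstacle I anticipate is not computational --- both engine identities are one-line consequences of \nn{ymc} and Yang--Mills --- but rather the representation-theoretic bookkeeping required to name the irreducible slots of $\Wedge^1(\cV)$ explicitly and to verify that the constraint components are genuine (gauge invariant, of the expected lower order, and not vacuous consequences of $M\varphi=0$); this hinges on tracking the Kostant homology of $\odot^{k-1}\cT^*$ in degrees $0$ and $1$.
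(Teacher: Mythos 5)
Your proposal is correct and follows essentially the same route as the paper: the paper's proof is precisely that $\boldsymbol{M}^\nabla\circ\cL_1\circ\mathcal{D}=0$ by commutativity of the first square of \nn{cdi} (which, unpacked, is your chain $\boldsymbol{M}^\nabla\df^\nabla\cL_0=\varepsilon(\deltaf^\nabla\cF)\cL_0=0$ via \nn{ymc} and the Yang--Mills property), with the second assertion obtained as the adjoint statement $\deltaf^\nabla\boldsymbol{M}^\nabla\cL_1=0$. Your additional remarks on reading off $M$ via $\cL^1$ and on the composition series of $\Wedge^1(\odot^{k-1}\cT^*)$ are consistent elaborations of what the paper leaves implicit.
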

\begin{proof}
The first statement follows at once from the fact that 
$$
\boldsymbol{M}^\nabla\circ {\mathcal L}_1\circ {\mathcal D}=0 ,
$$ since the diagram \nn{cdi} is commutative. The last statement is
adjoint of this. 
\end{proof}

We are now ready to take up examples. There is no strictly partially
massless system for spin one. We will treat spin two in more general
setting below, so we begin with spin three.

\subsection{Spin three}\label{s=3CC}

Theorem~\ref{Bigtheorem} shows that the BGG detour complex
produces a system of gauge invariant equations of motion and constraints for higher spin fields in non-flat, constant curvature backgrounds. These spaces arise from the intersection of projectively flat and Einstein projective structures.
In the following we show, as a special case, that this recovers 
 the standard, maximal depth,  PM spin three theory in four dimensions of~\cite{Deser:2001pe,Deser:2001us}.
This 
 theory is described in terms of a totally symmetric rank three  tensor~$\varphi_{abc}$ and a scalar auxiliary~$\chi$, introduced
in order that the equations of motion are Lagrangian. These read
\begin{equation}\label{bigformaggio}
\left\{
\begin{array}{l}
\Delta\varphi_{abc} -\frac {5\Lambda} 3 \varphi_{abc}-3\nabla_{(a}\nabla.\varphi_{bc)}+3\nabla_{(a}\nabla_{b}\bar\varphi_{c)}-3g_{(ab}\Delta\bar\varphi_{c)}\\[1.5mm]
\phantom{\Delta\varphi_{abc}}
+3g_{(ab}\nabla^{d}\nabla^{e}\varphi_{c)de}\, -\, \frac 3 2 g_{(ab}\nabla_{c)}\nabla. \bar\varphi\, +\, \frac 3 4 g_{(ab}\nabla_{c)}\chi=0\, ,\\[4mm]
\Delta \chi+\frac{\Lambda}3\nabla.\bar\varphi=0\, ,
\end{array}
\right.
\end{equation}
where $\bar\varphi_a:=\varphi_{ab}{}^b$ and $\nabla.\varphi_{bc}:=\nabla_a\varphi^a{}_{bc}$.
These equations imply, as integrability conditions, the constraints
\begin{equation}
\label{c1}
\left\{
\begin{array}{l}
\nabla_{(a}\nabla_{b)_\circ}\chi-\Lambda\nabla.\varphi_{(ab)_\circ}+2\Lambda\nabla_{(a}\bar\varphi_{b)_\circ}=0\, ,\\[3mm]
\nabla_a\chi+\frac \Lambda 3 \bar\varphi_a=0\, .
\end{array}
\right.
\end{equation}
We use the notation $(\cdots)_\circ$ to indicate the trace-free, symmetrized part of a group of indices. 
In addition to these constraints, this system enjoys a higher derivative, scalar, gauge invariance 
\begin{equation*}
\left\{\begin{array}{l}
\delta\varphi_{abc}=\left(\nabla_{(a}\nabla_{(b}\nabla_{c)_\circ\!)}+\frac \Lambda 2\,  g_{(ab}\nabla_{c)}\right)\varsigma\, ,\\[3.5mm]
\delta\chi=-\frac \Lambda 3\!\left(\Delta+\frac{10\Lambda}{3}\right)\varsigma\, .
\end{array}\right.
\end{equation*}

In the above, the key ingredient linking the PM model to the BGG machinery is the gauge operator $\nabla_{(a}\nabla_{(b}\nabla_{c)_\circ\!)}+\frac \Lambda 2 g_{(ab}\nabla_{c)}$ acting on the scalar gauge parameter~$\varsigma$. For that, we must relate it to a projectively invariant operator.
To facilitate this we remove all instances of the inverse metric; this can be achieved by defining the trace-adjusted field 
\be\label{happy}
\psi_{abc}:=\varphi_{abc}+\frac 1 2 g_{(ab}\bar\varphi_{c)}\, , 
\ee
whose gauge transformation is
\be
\label{gtnew}
\delta\psi_{abc}=\left(\nabla_{(a}\nabla_b\nabla_{c)}+\frac{ 4 \Lambda}{3}g_{(ab}\nabla_{c)}\right)\varsigma\, .
\ee
Remarkably, the projectively invariant operator
\begin{equation}\label{spin3proj}
\begin{array}{ccccc}
{\mathcal D}\!\!\!&:&\!\!\!\mathcal{E}(2)&-\!\!\!\longrightarrow&\hspace{-1.2cm}\odot^3T^*M(2)\\[1mm]
&&
\!\!\rotatebox{90}{$\in$}&&
\hspace{-1.2cm}\rotatebox{90}{$\in$}\\
&&\!\!\sigma&\mapsto&\hspace{-4mm}
\scalebox{.91}{$\frac12 \left[\nabla_{(a}\nabla_{b}\nabla_{c)}+4Q_{(ab}\nabla_{c)}+2(\nabla_{(a}Q_{bc)})\right]$}\sigma\end{array}
\end{equation}
matches the gauge operator
appearing in~\eqref{gtnew}
when computed in the Einstein scale (for which~$Q_{ab}=\frac \Lambda 3 g_{ab}$). 

As a further happy consequence of the field redefinition~\eqref{happy}, the {\it divergence constraint} on the first line of~\eqref{c1}, using the second line of that display to eliminate $\chi$, simplifies~to 
\bes
\label{happyconsequence}
\nabla.\psi_{ab}=\nabla_{(a}\bar\psi_{b)}\, .
\ees

According to the BGG construction described in the previous section, the $Sl(5)$ module required for maximal depth, spin three  is given by the Young diagram~$\Yvcentermath1{\tiny \Ylinethick1.5pt \yng(2)}$, 
and hence sections of
 the symmetric product of the cotractor bundle, denoted by~$\mathcal{T}_{(AB)}$. In a given scale, a general section of this bundle splits as
$$
 \begin{pmatrix}w_{ab}&v_a\\ v_b&\tau\end{pmatrix}=:\begin{pmatrix} \tau\\ v_a\\ w_{ab}\end{pmatrix}\, ,
$$
on which the tractor connection acts as follows:
$$
\nabla^{\mathcal{T}}_c
\begin{pmatrix}\tau\\ v_a\\ w_{ab}\end{pmatrix}=\left(\begin{array}{c}\nabla_c \tau-2v_c\\[1mm] \nabla_c v_a+Q_{c a}\tau-w_{c a}\\[2mm]\nabla_c w_{ab}+2Q_{c (a}v_{b)}\end{array}\right)\, .
$$
The first part of the diagram~\eqref{cdi} (encoding  the gauge transformation)  is  easily computed:
$$
\begin{diagram}
\mathcal{T}_{(AB)}&\ni\scalebox{.7}{$\left(\begin{array}{c}\sigma\\[3mm]\frac 1 2 \nabla_a\sigma\\[3mm](\frac 1 2 \nabla_a\nabla_b+Q_{ab})\sigma \end{array}\right)$}&\rMapsto^{\, \dfs^\nabla} &\scalebox{.7}{$\left(\begin{array}{c}0\\[3mm]0 \\[3mm] \frac 1 2 {\bm \nabla\!} \nabla_a \nabla_b\sigma+{\bm Q\!\!}_{ (a}\nabla_{b)}\sigma+{\bm \nabla\!\!} (Q_{ab} \sigma)\end{array}\right)$}&\,\in\,& 
\Wedge^1(\mathcal{T}_{(AB)})&\,\ni\,&\scalebox{.8}{$\left(\begin{array}{c}0\\ 0\\{\bm \psi\!\!}_{ ab} \end{array}\right)$}
\\&&&&&&&\\
&
\uMapsto_{\mathcal{L}_0} & &\dMapsto{\pi}&&&&\uMapsto_{\mathcal{L}_1 }\\
\mathcal{E}(2)\ni
\hspace{-2.5cm}&\sigma&\rMapsto^{\hspace{-1.3cm}\mathcal D} &
\scalebox{.8}{$\left(\frac12\nabla_{(a}\nabla_b\nabla_{c)}+2Q_{(ab}\nabla_{c)}+(\nabla_{(a}Q_{bc)})\right)$}
\sigma
&\,\in\,& \odot^3T^*M(2)&\,\ni\,&\hspace{-1mm}\psi_{abc}
\end{diagram}
$$
Here ${\!\bm \psi\!\!}_{ab}$  and ${\!\bm Q\!\!}_a$ denote the one-forms made from $\psi_{abc}$ and $Q_{ab}$ by soldering (the soldering form is denoted ${\!\bm e\!\!}^a$) and $\nabla$ is any connection in~$\p$.

We now calculate in the Einstein scale and use the parallel tractor metric:
$$
H_{AB}=\begin{pmatrix}g_{ab}&0\\[1mm]
0& \frac{3}{\Lambda} \end{pmatrix}\, .
$$
This allows us to 
 construct the detour long operator~$\boldsymbol{M}^\nabla\equiv \deltaf^\nabla \df ^{\nabla}-{\rm End}(\Omega^\#)$:
$$
\begin{diagram}
 \Wedge^1(\mathcal{T}_{(AB)})&\,\ni\,&\scalebox{.7}{$\left(\begin{array}{c}0\\ 0\\{\!\bm \psi\!\!}_{ ab} \end{array}\right)$}&\rMapsto^{\ {\Ms}^\nabla\ }& \scalebox{.7}{$\left(\begin{array}{c}0\\ \nabla . {\!\bm \psi\!\!}_{ a}-{\bm \nabla\!} \bar\psi_a \\[1mm]\Delta{\!\bm\psi\!\!}_{ab}-{\bm \nabla\!}\nabla.\psi_{ab}+\frac {2\Lambda} 3 {\!\bm e\!\!}_{(a}\bar\psi_{b)}-\frac {5\Lambda} 3 {\!\bm\psi\!\!}_{ a b} \end{array}\right)$}& \in \Wedge^1(\mathcal{T}_{(AB)})
 \\ &&&&&\\
&&\uMapsto_{\mathcal{L}_1 }&&\dMapsto_{\mathcal{L}^1 }&\\
 \odot^3T^*M&\,\ni\,&\scalebox{.9}{$\psi_{abc}$}&\rMapsto^{\!\!M}&\scalebox{.7}{$ \Delta\psi_{ abc}-\nabla_{(a}\nabla.\psi_{bc)}+\frac{ 2\Lambda} 3 g_{( ab}\bar\psi_{c)}-\frac {5\Lambda} 3 \psi_{ a bc}$}&\hspace{-5.5mm}\in \odot^3T^*M
\end{diagram}
$$
The operator $M$ on the bottom line of this diagram should be compared with the equations of motion of the PM system. By construction, it annihilates the gauge operator of~\eqref{gtnew}. Before making this comparison, we first note that the detour complex on the top line of the diagram encodes further information.
First, the second line in the image of $\boldsymbol{M}^\nabla$ reads
(making the form index explicit)
$$
\nabla.\psi_{ab}-\nabla_a \bar\psi_b=0\, .
$$
This equation has both a symmetric and antisymmetric piece.
The latter implies that~$\bar\psi_a$ is a closed one-form, and thus locally the gradient of some scalar, which we identify with the auxiliary field~$\chi$:
$$
\bar\psi_a=-\frac6\Lambda\nabla_a \chi\, .
$$
This reproduces the second constraint in~\eqref{c1}. Turning to the
totally symmetric piece, it immediately reproduces the divergence constraint as expressed in~\eqref{happyconsequence}.

It remains to analyse the bottom slot of the image of $\boldsymbol{M}^\nabla$. 
It is not difficult to verify that its mixed symmetry part vanishes, modulo the divergence constraint~\eqref{happyconsequence}; here one also uses that the structure is Weyl-flat. 
The totally symmetric part of the bottom slot in the image of~$\boldsymbol{M}^\nabla$  matches exactly the equation of motion~\eqref{bigformaggio} for~$\varphi_{abc}$, upon employing~\eqref{happy}, again modulo~\eqref{happyconsequence}. Since the operator ${\mathcal L}^1$ projects the bottom slot onto its totally symmetric part, this establishes our claim.

Finally, we note, that for any projectively flat structure, 
$\df^\nabla {\mathcal L}_1(\varphi_{abc})$ produces a tractor for which only the bottom slot, $\boldsymbol{\nabla} \boldsymbol{\varphi}_{bc}$, is nonvanishing, and hence both projectively and gauge invariant. This quantity is the spin three generalization of the PM curvature found in~\cite{DeserEM}.

\section{Einstein Backgrounds}\label{EBs}

Einstein metrics play a special {\it r\^ole}
in projective geometry. In particular, recall from Section~\ref{EPS}, that a metric
$g$ which is Einstein but not Ricci flat, is equivalent to 
 a non-degenerate parallel metric~$H$ on the projective tractor bundle~$\cT$~\cite{CGM}.
We  now study projective BGG sequences and possible detour complexes in an Einstein setting. This allows us to address the physical question of higher spin propagation on Einstein manifolds.

\renewcommand{\B}{{\mbox{\sf B}}}

\subsection{Partially massless spin two}
We begin our Einstein investigation with the propitious,  spin two case.
As mentioned in the introduction,  PM spin two  was first discovered in studies of constant curvature, lightcone propagation~\cite{Deser:1983tm} (see as well~\cite{Deser:2001pe,Deser:2001xr}). However, it also has an interesting geometric, conformal gravity, origin~\cite{Maldacena,Deser:2013bs,Deser:2012qg}:
 Consider the four-dimensional conformal gravity action,
$$
S= \int  \epsilon_{abcd} {\! \bm W\!\!}^{ab}\wedge {\!\bm W\!\!}^{cd}\, . 
$$
This action is extremized by
 vanishing of the Bach tensor~$\B_{cd}$. The latter can be defined
 by a differential operator acting on the Schouten tensor 
  $$
\B_{ab}=\Big(\!-\delta_{a}^c\, \Delta\,  \delta_{b}^d+\nabla^c\delta^d_{(a}\nabla_{b)}   +\W^{c}{}_{ab}{}^d\Big)\P_{cd}\, .
$$ 
Defining~$\varphi_{ab}=\frac{\Lambda}{6}g_{ab}-\P_{ab}$, then the
Bach flat condition of the above display, dropping terms nonlinear in
$\varphi$, becomes (denoting the trace by $\bar\varphi:=\varphi^a{}_a$
and $\nabla.\varphi_a:=\nabla^b\varphi_{ab}$)
\begin{equation}\label{PM2eom}
\begin{split}
\Delta\varphi_{ab}-2\nabla_{(b}\nabla.\varphi_{a)}&+g_{ab}\nabla.\nabla.\varphi+\nabla_a\nabla_b\bar\varphi-g_{ab}\Delta\bar\varphi\\[2mm]&-2\W_{a}{}^{cd}{}_b\varphi_{cd}-\frac 4 3\Lambda(\varphi_{ab}-\frac 1 4 g_{ab}\bar\varphi)=0\, .
\end{split}
\end{equation}
The above display is exactly the PM equation of motion in an Einstein background~\cite{Dolan:2001ih} while the quantity $\varphi_{ab}$ measures the failure of $g_{ab}$ to be Einstein.
The divergence constraint
\be\label{spin2constraint}
\nabla^b\varphi_{ab}=\nabla_a\bar\varphi\, ,
\ee
 follows as an integrability condition of~\eqref{PM2eom}. Moreover, 
the PM equation of motion is variational and enjoys a higher derivative gauge invariance 
\be\label{spin2gaugeinvariance}
\delta\varphi_{ab}=\big(\nabla_a\nabla_b+\frac \Lambda 3 g_{ab}\big)\varepsilon\, .
\ee
The gauge operator here is  intimately related to projective geometry and BGG sequences.

\subsection{Spin two BGG}

We now return to projective geometry and
consider   the operator 
\be\label{1BGGspin2}
\begin{array}{rccc}
{\mathcal D} :& \mathcal{E}(1)&\longrightarrow&\!\!\!\!\odot^2 T^*M(1)
\\[1mm]&\rotatebox{90}{$\in$}&&\rotatebox{90}{$\in$}\ \ \ \\
&\sigma&\mapsto&\!\!\!\left(\nabla_{(a}\nabla_{b)}+Q_{(ab)}\right)\sigma
\end{array}
\ee
For Einstein projective structures,  computing in the Einstein scale, the above  reproduces the gauge operator appearing in~\ref{spin2gaugeinvariance}.

Using the parallel tractor metric
we construct a detour complex which describes  the spin two  PM system on Einstein backgrounds. 
The first
 BGG operator is the one displayed  in~\eqref{1BGGspin2}. 
We have summarized the other ingredients of this complex in the table below:
\begin{center}
\begin{tabular}{lcccc|c|rcccc}\hline
&&&&&&&&&&\\
${\mathcal D}\hspace{-3mm}$&:\!\!\!&$\mathcal{E}(1)$&$\to$&$\odot^2T^*M(1)$&&$\ \ {\mathcal D}^*\hspace{-3mm}$&$:\!\!\!$&$ \odot^2T^*M(-1)~$&$\to$&$\mathcal{E}(-5)$\\[.5mm]
&&\rotatebox{90}{$\in$}&&\rotatebox{90}{$\in$}&&&&\rotatebox{90}{$\in$}&&\rotatebox{90}{$\in$}\\[-1mm]
&&$\sigma$&$\mapsto$&$\big(\nabla_a\nabla_b+Q_{ab}\big)\sigma$&&&&$\gamma_{ab}$&$\mapsto$&$\big(\nabla^a\nabla^b+Q^{ab}\big)\gamma_{ab}$\\[5mm]
$\mathcal{L}_0\hspace{-3mm}$&$:\!\!\!$&$\mathcal{E}(1)$&$\to$&$\mathcal{T}_A$&&$\ \ \mathcal{L}^0\hspace{-3mm}$&$:\!\!\!$&$\mathcal{T}_A(-4)$&$\to$&$\mathcal{E}(-5)$\\[.5mm]
&&\rotatebox{90}{$\in$}&&\rotatebox{90}{$\in$}&&&&\rotatebox{90}{$\in$}&&\rotatebox{90}{$\in$}\\[-1mm]
&&$\sigma$&$\mapsto$&$\left(\sigma,\nabla\sigma\right)$&&&&$\left(\rho, \mu_a\right)$&$\mapsto$&$\frac \Lambda 3 \rho-\nabla^a \mu_a$\\[5mm]
$\mathcal{L}_1\hspace{-3mm}$&$:\!\!\!$&$\odot^2T^*M(1)$&$\to$&$\Wedge^1(\mathcal{T}_A)$&&$\ \ \mathcal{L}^1\hspace{-3mm}$&$:\!\!\!$&$\Wedge^1(\mathcal{T}_A(-2))$&$\to$&$\odot^2T^*M(-1)$\\[.5mm]
&&\rotatebox{90}{$\in$}&&\rotatebox{90}{$\in$}&&&&\rotatebox{90}{$\in$}&&\rotatebox{90}{$\in$}\\[-1mm]
&&$\varphi_{ab}$&$\mapsto$&$\left(0,\varphi_{ab}\right)$&&&&$\left(\xi_a,\nu_{ab}\right)$&$\mapsto$&$\nu_{(ab)}$\\[5mm]
\hline
\end{tabular}
\end{center}
\vspace{5mm}
A straightforward   computation shows that
\begin{equation}\label{projPM}
\boldsymbol{M}^{\nabla }\mathcal{L}_1(\varphi_{ab})=\left(\begin{array}{c}-\nabla.{\!\bm \varphi\!}+{\!\bm \nabla\!\!}\bar\varphi \\[2mm]
\Delta{\!\bm \varphi\!\!}_{ a}-\nabla^b{\!\bm \nabla\!\!}\varphi_{ab}-{\!\bm W\!\!}^{cd}{}_a\varphi_{cd}\end{array}
\right) \, .
\end{equation}
Here, as earlier, $({\!\!\bm \varphi\!\!}_a,{\!\bm W\!\!}^{cd}{}_a)$ denote the soldered one-forms made from $\varphi_{ba}$ and $\W_{b}{}^{cd}{}_a^{\phantom{d}}$, while $\nabla$ is the Levi-Civita coupled exterior derivative and $\Delta:=\nabla^a\nabla_a$ is the Bochner Laplacian. By construction (see Theorem~\ref{Bigtheorem}), the quantities in the  above enjoy the gauge invariance~\eqref{spin2gaugeinvariance}.
The top slot is exactly the constraint~\ref{spin2constraint}. The bottom slot has both a symmetric and antisymmetric piece.
The latter vanishes on Einstein backgrounds using the constraint.
The symmetric piece (which is the image of ${\mathcal L}^1$), again modulo the constraint, is precisely the PM equation of motion~\ref{PM2eom}.
Finally, note that $\df^\nabla {\mathcal L}_1(\varphi_{ab})$ produces a tractor for which only the bottom slot, $\boldsymbol{\nabla}\boldsymbol{\varphi}_b$, is nonvanishing, and hence both projectively and gauge invariant. This quantity is the PM curvature found in~\cite{DeserEM}.

\subsection{Higher spin Einstein obstructions}

Spins greater than two do not enjoy the special status of metric fluctuations which arise as the linearization of a consistent interaction theory. Their  propagation in constant curvature spaces has been studied in detail~\cite{Deser:2001pe,Deser:2001us,Deser:2001wx,Deser:2001xr,Deser:2003gw,Deser:2004ji,Deser:2013xb}. In this section we employ the BGG machine to characterize the obstruction to  putative maximal depth PM spin three couplings to  Einstein backgrounds. 

Again our starting point is the projectively invariant operator ${\mathcal D}$ given in~\eqref{spin3proj}.
However, now reconsider the diagram: 
$$
\begin{diagram}
\mathcal{T}_{(AB)}&\rTo^{\, \dfs^{\nabla}\ } &
\Wedge^1(\mathcal{T}_{(AB)})\\
\uTo_{\mathcal{L}_0} & &\uTo_{\mathcal{L}_1 }\\
\mathcal{E}(2)&\rTo^{\mathcal D\ } & \odot^3T^*M(2)
\end{diagram}
$$
When the structure ${\p}$ is  projectively flat 
the above diagram is commutative, while 
to construct the corresponding detour complex we must require the structure to be projective Einstein; these two conditions together yield constant curvature theories.

Relaxing the projectively flat condition, we may still construct a BGG detour, at the cost of replacing the space  $\odot^3T^*M(2)$
by the reducible bundle $(T^*M\otimes \odot^2T^*M)(2)$. We chronicle this new detour complex in Proposition~\ref{reducibledetour} at the end of this section. There is however, an alternate---well-known in the theory of prolongations for overdetermined systems~\cite{eastwood,EM})---method to maintain commutativity of the above diagram. This is achieved by replacing~$\nabla$ by the {\it prolongation connection}~$\widetilde \nabla$, defined such that the diagram above  commutes. Of course, 
this new connection may not, in general,  solve the Yang--Mills equations required for the detour sequence to be a complex. This gives a tight characterization of the obstruction to higher spin, Einstein,  PM models.

We are now tasked with finding the prolongation connection $\widetilde \nabla$ on the vector bundle~${\mathcal T}_{(AB)}$ obeying
$$
\df^{\widetilde \nabla }\circ {\mathcal L}_0=
{\mathcal L}_1\circ {\mathcal D}\, .
$$
For that computation we focus, for simplicity 
on Einstein projective structures. In the Einstein scale we find
\begin{equation}\label{DoL}
 \Big(\df^{ \nabla }\circ {\mathcal L}_0-
{\mathcal L}_1\circ {\mathcal D}\Big)(\sigma)=\frac13\, Z^a Z^b
 {\!\bm W\!\!\!}_{(ab)}{}^d \nabla_d \sigma
 =:q\Big(\frac13
 W_{c(ab)}{}^d \nabla_d \sigma\Big)\, .
\end{equation}
Here we have used the map  $Z:T^*M(1) {\longrightarrow} \cT^*$ in the composition series~\eqref{composition} to define the insertion operator~$q$ into the bottom slot of  $\Wedge^1\big(\cT_{(AB)}\big)$:
\begin{equation}\label{projector}
\begin{array}{ccccc}
q&\!\!:\!\!&
\big(T^*M\otimes \odot^2T^*M\big)(2)&\longrightarrow& \Wedge^1\big(\cT_{(AB)}\big)\\[2mm]
&&
\rotatebox{90}{$\in$}&&
\hspace{0cm}\rotatebox{90}{$\in$}\\[1mm]
&&\alpha_{cab}&\longmapsto&
\begin{pmatrix}
0\\[1mm]
0\\[0mm]
{\!\bm \alpha\!\!}_{ab}
\end{pmatrix}
\end{array}
\end{equation} 
The canonical projection $\ker(\iota(X))\cap\Wedge^1\big(\cT_{(AB)}\big)\to \big(T^*M\otimes \odot^2T^*M\big)(2)$ is denoted by~$q^\star$.
Note that for Einstein projective structures, $q^\star$ is the formal adjoint of $q$.

On sections of~${\mathcal T}_{(AB)}$
the prolongation connection  is given by
$$
\widetilde\nabla_a \begin{pmatrix}
\tau\\ v_b \\ w_{bc}
\end{pmatrix}
:=\begin{pmatrix}
\nabla_a \tau -2 v_a\\[1mm] 
\nabla_a v_b +Q_{ab} \tau - w_{ab}\\[1mm] 
\nabla_a w_{bc} + 2Q_{a(b} v_{c)}
-\frac23 W_{a(bc)}{}^d v_d
\end{pmatrix}\, .
$$
Hence we have by now established the following result:
\begin{lemma}
The following diagram commutes for projective Einstein structures:
$$
\begin{diagram}
\mathcal{T}_{(AB)}&\rTo^{\, \dfs^{\widetilde \nabla}\ } &
\Wedge^1(\mathcal{T}_{(AB)})\\
\uTo_{\mathcal{L}_0} &
\ \ \ \ \scalebox{1.9}{$\circlearrowleft$}
 &\uTo_{\mathcal{L}_1 }\\
\mathcal{E}(2)&\rTo^{{\mathcal D}\ } & \odot^3T^*M(2)
\end{diagram}
$$
\end{lemma}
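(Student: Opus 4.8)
The plan is to read the result off directly from the obstruction already computed in \eqref{DoL}, together with the explicit formula defining the prolongation connection $\widetilde\nabla$. The key structural observation is that $\widetilde\nabla$ and the tractor connection $\nabla^\cT$ agree in their top and middle slots and differ only by the single term $-\tfrac{2}{3}W_{a(bc)}{}^d v_d$ in the bottom slot. Consequently the difference $\df^{\widetilde\nabla}\mathcal{L}_0-\df^{\nabla}\mathcal{L}_0$ feeds solely on the middle slot of $\mathcal{L}_0(\sigma)$ and deposits its value entirely in the bottom slot of $\Wedge^1(\cT_{(AB)})$. So the whole matter reduces to a bottom-slot bookkeeping.

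First I would record the slots of $\mathcal{L}_0(\sigma)$ as read from the spin-three diagram: $\tau=\sigma$, $v_a=\tfrac12\nabla_a\sigma$, and $w_{ab}=(\tfrac12\nabla_a\nabla_b+Q_{ab})\sigma$. (Here one uses that in a scale $Q_{[ab]}=0$, hence $[\nabla_a,\nabla_b]\sigma=0$ on the density $\sigma$, so that $w_{ab}$ is genuinely symmetric and $\mathcal{L}_0(\sigma)$ really lands in $\cT_{(AB)}$.) Substituting $v_d=\tfrac12\nabla_d\sigma$ into the extra term of $\widetilde\nabla$ produces the bottom-slot contribution $-\tfrac{2}{3}W_{c(ab)}{}^d\cdot\tfrac12\nabla_d\sigma=-\tfrac13 W_{c(ab)}{}^d\nabla_d\sigma$, that is, precisely $-q\big(\tfrac13 W_{c(ab)}{}^d\nabla_d\sigma\big)$ in the notation of \eqref{projector}.

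Second, I would combine this with \eqref{DoL}, which asserts that with the unmodified connection one has $(\df^{\nabla}\mathcal{L}_0-\mathcal{L}_1\mathcal{D})(\sigma)=q\big(\tfrac13 W_{c(ab)}{}^d\nabla_d\sigma\big)$. Adding the extra bottom-slot contribution then gives
$$
(\df^{\widetilde\nabla}\mathcal{L}_0-\mathcal{L}_1\mathcal{D})(\sigma)=q\Big(\tfrac13 W_{c(ab)}{}^d\nabla_d\sigma\Big)-q\Big(\tfrac13 W_{c(ab)}{}^d\nabla_d\sigma\Big)=0,
$$
which is exactly the claimed commutativity. The top and middle slots require no separate argument: they coincide with those for $\nabla$, and \eqref{DoL} shows the obstruction there is purely a bottom-slot quantity, so those slots already vanish; only the bottom-slot cancellation calls for the modification.

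The step demanding the most care is verifying that the extra term of $\widetilde\nabla$ reproduces the obstruction of \eqref{DoL} with exactly the right coefficient and index symmetry — that pairing the factor $-\tfrac{2}{3}$ with $v_d=\tfrac12\nabla_d\sigma$ yields $-\tfrac13$, and that the symmetrization in $W_{c(ab)}{}^d$ (over the two symmetric tractor indices, with the form index separate) matches the $Z^aZ^b$ insertion in \eqref{DoL}. Since $\widetilde\nabla$ was defined precisely to engineer this cancellation, the lemma is then immediate; the genuinely computational content lies in \eqref{DoL}, whose derivation uses the Einstein-scale relations $Q_{ab}=\tfrac{\Lambda}{3}g_{ab}$ and $W=\W$.
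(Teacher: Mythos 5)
Your proposal is correct and follows essentially the same route as the paper: the paper's justification of this lemma is precisely the computation \eqref{DoL} combined with the definition of $\widetilde\nabla$, whose extra bottom-slot term $-\tfrac23 W_{a(bc)}{}^d v_d$ evaluated on $v_d=\tfrac12\nabla_d\sigma$ cancels the obstruction $q\big(\tfrac13 W_{c(ab)}{}^d\nabla_d\sigma\big)$. Your bookkeeping of the coefficient and of the fact that the modification lives only in the bottom slot matches the paper's (largely implicit) argument.
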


\begin{remark}
The above result can be trivially extended to general projective structures~$\p$.
\end{remark}

The curvature~$\widetilde\Omega\in \Wedge^2\big({\rm End}({\mathcal T}_{(AB)})\big)$ of the prolongation connection~$\widetilde \nabla$ acting on a section of ${\mathcal T}_{(AB)}$  is given by:
$$
\widetilde \Omega\circ \begin{pmatrix}
\tau\\[1mm] v_a \\[1mm] w_{ab}
\end{pmatrix} =
\begin{pmatrix}
0\\[1mm] 0\\[1mm]
-\frac23 ({\!\bm \nabla\!\!} {\!\bm W\!\!\!}_{(ab)}{}^c)v_c+2{\!\bm W\!\!\!}_{(a}{}^c w_{b)c}+\frac23 {\!\bm W\!\!\!}_{(ab)}{}^c {\!\bm w\!\!}_c
\end{pmatrix}\, .
$$
As usual, ${\!\bm w\!\!}_b$ here denotes the one-form obtained by soldering $w_{ab}$.
This curvature does not obviously  obey the Yang--Mills equation~\eqref{YM} (replacing, of course $\nabla\to \widetilde\nabla$) unless  the structure is projectively flat. 

\begin{remark}
The detour operator  arises as the second variation of a Yang--Mills action principle~\cite{GSS}. Hence, one might consider 
other gauge invariant action principles whose first and second variations would modify, respectively, the Yang--Mills equation and  detour operator. Whether there exists such an action principle for which the connection~$\widetilde \nabla$ is ``Yang--Mills'', is an open question.
\end{remark}

\begin{remark}
It is interesting to note that some results do exist for massive spin three theories coupled to curved backgrounds. In particular, Zinoviev has given an action principle purported to describe massive spin three excitations on general Ricci flat spacetimes~\cite{Zinoviev:2008ck}. This action  depends on a quartet of totally symmetric, rank $(0,1,2,3)$, tensor fields and enjoys accompanying gauge invariances. For Minkowski backgrounds, the rank $(0,1,2)$ ``St\"uckelberg''
fields of that model can be algebraically gauged away leaving  a minimal field content that can be compared with that found within our BGG framework.
This might be taken as evidence for the existence of curved versions of the spin three PM BGG detour complex. However we note that for general Ricci flat backgrounds, the auxiliaries in the approach of~\cite{Zinoviev:2008ck} can no longer be algebraically gauged away.
\end{remark}

Finally, as promised, we describe a new detour complex
whose long operator is defined for reducible bundles. First we recompute~\eqref{DoL} for general projective structures:
$$
\Big[\df^{ \nabla }\circ {\mathcal L}_0-
{\mathcal L}_1\circ {\mathcal D}\Big](\sigma)=
\frac13 \big({\!\!\bm W\!\!\!}_{(ab)}{}^c \nabla_c - 2{\! \bm C\!\!}_{(ab)}\big)\sigma\, .
$$
This gives a  projectively invariant operator
$$
\begin{array}{ccccc}
{\mathcal D}_{_{\scalebox{.25}{\yng(2,1)}}}&\!:\!&
\ce(2)&\longrightarrow &
\!\!\!\!\!\!T_{_{\scalebox{.25}{\yng(2,1)}}}^*M(2)\\[2mm]
&&
\rotatebox{90}{$\in$}&&
\hspace{-.4cm}\rotatebox{90}{$\in$}\\[.5mm]
&&\sigma&\longmapsto&
\big({W}_{a(bc)}{}^d \nabla_d - 2{ C}_{a(bc)}\big)\sigma\, .
\end{array}
$$
The above is trivial for projectively flat structures.
It is the mixed symmetry part of the map
$$
\begin{array}{ccccc}
\widetilde {\mathcal D}&\!:\!&\ce(2)&\longrightarrow& 
\big(T^*M\otimes \odot^2T^*M\big)(2)\\[2mm]
&&
\rotatebox{90}{$\in$}&&
\hspace{-.4cm}\rotatebox{90}{$\in$}\\[.5mm]
&&\sigma&\longmapsto&
\frac 1 2 \nabla_c \nabla_a \nabla_b\sigma+{Q}_{ c(a}\nabla_{b)}\sigma+ \nabla_c (Q_{ab} \sigma)
\end{array}
$$
We use this map to establish the following result.
\begin{lemma}
For any projective structure~$\p$, the following diagram commutes:
$$
\begin{diagram}
\mathcal{T}_{(AB)}&\rTo^{\, \dfs^{ \nabla}\ } &
\Wedge^1(\mathcal{T}_{(AB)})\\
\uTo_{\mathcal{L}_0} &
\ \ \ \ \scalebox{1.9}{\ \ $\circlearrowleft$}
 &\uTo_{q }\\
\mathcal{E}(2)&\rTo^{\widetilde{\mathcal D}\ } & \big(T^*M\otimes \odot^2T^*M\big)(2)
\end{diagram}
$$
\end{lemma}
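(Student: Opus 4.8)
The plan is to prove commutativity by a direct slot-by-slot computation in a choice of scale. Since $\mathcal{L}_0$, $\df^\nabla$ and $q$ are each projectively invariant, it suffices to verify the identity $\df^\nabla\circ\mathcal{L}_0=q\circ\widetilde{\mathcal D}$ in any one scale; I would work in a general scale, so that $Q_{[ab]}=0$ and $\nabla_a\nabla_b\sigma$ is symmetric on the density $\sigma$. First I would recall from Section~\ref{s=3CC} the explicit splitting operator
$$
\mathcal{L}_0(\sigma)=\begin{pmatrix}\sigma\\[1mm]\tfrac12\nabla_a\sigma\\[1mm]\big(\tfrac12\nabla_a\nabla_b+Q_{ab}\big)\sigma\end{pmatrix},
$$
together with the action of the tractor connection $\nabla^{\mathcal T}_c$ on a section $(\tau,v_a,w_{ab})$ of $\mathcal{T}_{(AB)}$ displayed there.

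Then I would apply $\nabla^{\mathcal T}_c$ to $\mathcal{L}_0(\sigma)$ one slot at a time. The top slot gives $\nabla_c\sigma-2\cdot\tfrac12\nabla_c\sigma=0$, and the middle slot gives $\tfrac12\nabla_c\nabla_a\sigma+Q_{ca}\sigma-\big(\tfrac12\nabla_c\nabla_a+Q_{ca}\big)\sigma=0$; both cancel identically, with no curvature hypothesis whatsoever. The bottom slot produces
$$
\nabla_c\big(\tfrac12\nabla_a\nabla_b+Q_{ab}\big)\sigma+2Q_{c(a}\tfrac12\nabla_{b)}\sigma=\tfrac12\nabla_c\nabla_a\nabla_b\sigma+Q_{c(a}\nabla_{b)}\sigma+\nabla_c(Q_{ab}\sigma),
$$
which is exactly $\widetilde{\mathcal D}(\sigma)$. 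Thus $\df^\nabla\mathcal{L}_0(\sigma)$ has vanishing top and middle slots and bottom slot equal to $\widetilde{\mathcal D}(\sigma)$; since $q$ is by definition the insertion of its argument into the bottom slot of $\Wedge^1(\mathcal{T}_{(AB)})$, this yields $\df^\nabla\circ\mathcal{L}_0=q\circ\widetilde{\mathcal D}$, which is the asserted commutativity.

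There is no genuine analytic obstacle here; the substantive point to articulate is \emph{why} the diagram commutes for every projective structure, in contrast to the preceding lemmas. In the irreducible version the upper-right vertex is $\odot^3T^*M(2)$ and the right-hand vertical arrow is the invariant inclusion $\imath=\mathcal{L}_1$, which retains only the totally symmetric part of the bottom slot; the discrepancy $\df^\nabla\mathcal{L}_0-\mathcal{L}_1\circ\mathcal D$ is then the mixed symmetry remainder recorded just above, proportional to $W_{(ab)}{}^c\nabla_c-2C_{(ab)}$, which vanishes only in the projectively flat case. Replacing $\odot^3T^*M(2)$ by the reducible bundle $(T^*M\otimes\odot^2T^*M)(2)$ and $\imath$ by the full insertion $q$ restores the entire bottom slot, so the computation closes on the nose for an arbitrary $\p$. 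I would close by noting that, since $q$ is an invariant injection and $\df^\nabla\circ\mathcal{L}_0$ is invariant, this identity also confirms that $\widetilde{\mathcal D}$ is itself projectively invariant, consistent with its decomposition into the symmetric first BGG operator $\mathcal D$ and the mixed symmetry operator ${\mathcal D}_{_{\scalebox{.25}{\yng(2,1)}}}$.
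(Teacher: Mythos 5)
Your proposal is correct and follows exactly the route the paper intends: the lemma holds essentially by construction, since $\widetilde{\mathcal D}$ is defined to be precisely the bottom slot of $\df^\nabla\circ\mathcal{L}_0$, and the vanishing of the top and middle slots is the same slot-by-slot cancellation already displayed in the spin-three diagram of Section~\ref{s=3CC}. Your closing remarks correctly identify why passing to the reducible target and the full insertion $q$ removes the projective-flatness hypothesis needed for the irreducible version with $\mathcal{L}_1=\imath$.
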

We have by now gathered together the main parts of the BGG detour machine and assemble them in the next proposition:
\begin{proposition}\label{reducibledetour}
Let $\p$ be an Einstein projective structure and denote the formal adjoint of $\widetilde {\mathcal D}$ by $\widetilde{\mathcal D}^\star$.  Then, calling $$M:=q^\star\!\circ\!\boldsymbol{M}^\nabla\! \circ q\, ,$$ 
we have the following complex
\begin{equation*}
 0\longrightarrow C^\infty (M) \xrightarrow{\,\,\,\widetilde {\mathcal D} \,\,\,}\big(T^*M\otimes \odot^2T^*M\big)
\xrightarrow{\,\,\,\,\,M \,\,\,\,\,}T^*M\otimes \odot^2T^*M
 \xrightarrow{\,\,\, \widetilde{\mathcal
    D}^\star \,\,\,} C^\infty( M)\longrightarrow 0.
\end{equation*}
\end{proposition}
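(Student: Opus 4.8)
The plan is to recognise the claimed four-term sequence as a direct instance of the Translating construction of Theorem~\ref{4.2+}, applied to the bundle $\cV=\mathcal{T}_{(AB)}=\odot^2\cT^*$ equipped with its standard tractor connection $\nabla$, with the bundle maps $q$ and $q^\star$ of \eqref{projector} playing the roles of $\mathcal{L}_1$ and $\mathcal{L}^1$. Concretely, to prove that the sequence is a complex I must establish the two vanishing statements $M\circ\widetilde{\mathcal D}=0$ and $\widetilde{\mathcal D}^\star\circ M=0$, where $M=q^\star\circ\boldsymbol{M}^\nabla\circ q$; formal self-adjointness is then automatic from Theorem~\ref{4.2+}. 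Working in the Einstein scale $\tau$ to trivialise the density bundles identifies the weighted outer and middle spaces with $C^\infty(M)$ and $T^*M\otimes\odot^2T^*M$ respectively, so the sequence takes exactly the stated form.

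First I would check the three hypotheses of Theorem~\ref{4.2+}. Commutativity of the first square, $\df^\nabla\circ\mathcal{L}_0=q\circ\widetilde{\mathcal D}$, is precisely the content of the Lemma immediately preceding this Proposition, valid for any projective structure. That $\nabla$ is a metric connection on $\odot^2\cT^*$ follows because, for a non-Ricci-flat Einstein projective structure, the tractor metric $H$ of \eqref{tractormetric} is parallel and non-degenerate, hence induces a parallel metric on $\odot^2\cT^*$; with respect to this metric $q^\star$ is the formal adjoint of $q$, as already noted. Finally, $\nabla$ is Yang--Mills on $\odot^2\cT^*$: the standard tractor connection satisfies $\nabla^a\Omega_{ab}{}^C{}_D=0$ on any Einstein manifold (established at the close of Section~\ref{EPS} from the contracted Bianchi identity and the vanishing of the projective Cotton tensor), and the Yang--Mills condition passes to the connection induced on tensor parts such as $\odot^2\cT^*$, as recorded in Section~\ref{YMdet}. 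Thus $\deltaf^\nabla\Omega=0$.

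With the hypotheses in hand the two identities follow formally. For the first,
$$
M\circ\widetilde{\mathcal D}=q^\star\,\boldsymbol{M}^\nabla\,q\,\widetilde{\mathcal D}=q^\star\,\boldsymbol{M}^\nabla\,\df^\nabla\,\mathcal{L}_0=q^\star\,\varepsilon(\deltaf^\nabla\Omega)\,\mathcal{L}_0=0,
$$
where the second equality uses the commuting square and the third uses the identity $\boldsymbol{M}^\nabla\df^\nabla=\varepsilon(\deltaf^\nabla\Omega)$ from Theorem~\ref{4.2}. Since $\boldsymbol{M}^\nabla$ is formally self-adjoint for the tractor metric and $q^\star$ is the adjoint of $q$, the operator $M=q^\star\boldsymbol{M}^\nabla q$ is itself formally self-adjoint, so the second identity is simply the adjoint of the first: $\widetilde{\mathcal D}^\star\circ M=(M\circ\widetilde{\mathcal D})^\star=0$. (Equivalently, one invokes $\deltaf^\nabla\boldsymbol{M}^\nabla=-\iota(\deltaf^\nabla\Omega)=0$ together with the adjoint commuting square.)

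I expect the only real subtlety---the main obstacle---to be the Yang--Mills step: one must be careful that the curvature $\Omega$ entering the identities of Theorem~\ref{4.2} is that of the genuine tractor connection $\nabla$ on $\odot^2\cT^*$, and \emph{not} the prolongation connection $\widetilde\nabla$ whose curvature is generically not co-closed, and that co-closedness genuinely survives passage from the standard tractor bundle to its symmetric tensor square. A secondary point requiring attention is the restriction to the non-Ricci-flat case, needed so that $H$ is non-degenerate and the metric, self-adjoint structure underpinning $q^\star$ and the self-adjointness of $M$ is available; the density trivialisations in the Einstein scale must also be tracked to confirm that the outer terms are honestly $C^\infty(M)$.
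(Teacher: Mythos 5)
Your proposal is correct and follows essentially the same route as the paper: the paper's own proof is just the observation that $\widetilde{\mathcal D}$ was constructed to make the first square of \eqref{cdi} commute (the preceding Lemma), that the tractor connection $\nabla^{\cT}$ is Yang--Mills on Einstein projective structures, and then an appeal to Theorem~\ref{4.2+}. Your write-up simply makes explicit the checks the paper leaves implicit, and your flagged subtlety---that the Yang--Mills curvature identities apply to the genuine tractor connection $\nabla$ rather than the prolongation connection $\widetilde\nabla$---is exactly the point of using $q$ in place of $\mathcal{L}_1$ here.
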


\begin{proof}
We have constructed $\widetilde{\mathcal D}$ so that the analog of the first square in the diagram~\eqref{cdi} commutes. Moreover the connection $\nabla^\cT$ is Yang--Mills. Thus we are in the situation of 
Theorem~\ref{4.2+}.
\end{proof}

\begin{remark}
The above proposition defines a novel gauge theory on Einstein backgrounds. However, since the image of the gauge operator $\widetilde{\mathcal D}$ is no longer reducible, the irreducible gauge field content is a totally symmetric  {\it and} a mixed symmetry tensor, both of rank three. These are, by construction, subject to a single, scalar, gauge invariance as well as a system of gauge invariant constraints and equations of motion determined by $M^\nabla$. The physical consequences of the mixed symmetry field content is currently unclear.
\end{remark}

\section{Action principles}\label{Action}

Action principles for PM  models were first constructed, for lower spin examples, in~\cite{Deser:2001us}. These were extended to arbitrary spins in~\cite{Zinoviev} by integrating in additional auxiliary fields and then requiring there exist extra gauge invariances removing these leaving the PM systems. Shortly afterwards 
it was realized that action principles could be obtained geometrically by log-radially reducing~\cite{Biswas}  massless systems in one higher dimension~\cite{Hallowell}. As mentioned in the introduction and explicated in Appendix~\ref{LRR}, log-radial reduction  is intimately related to projective geometry. The actions obtained from this reduction were ``metric--like'', meaning  that the field content was arranged in sections of totally symmetric projective tractor bundles. The BGG machine gives what is often called a ``frame-like formulation'' because one deals with tractor-valued differential forms. Frame-like PM action principles have been given in~\cite{Skvortsov:2006at,Zinoviev:2014zka}. These methods introduce progressively more field content 
in order to make actions simple, and possibly amenable to interacting theories~\cite{Vasiliev}.
Conversely, first order, Hamiltonian action principles written in terms of only the physical DoF were introduced in~\cite{Deser:2004ji}.
Here, we give action principles germane to the detour set-up because we are interested in the connection of these systems to projective geometry.

Our key principle is to construct gauge invariant functionals; these are  far from unique. 
Ensuring that gauge invariant equations also imply a gauge invariant system of constraints as their integrability conditions  singles out
the correct action principle.

The operator $\boldsymbol{M}^\nabla$ annihilates the gauge operator $\df^{\nabla}$. Therefore  the functional, defined on projective Einstein structures,  
\begin{equation}\label{SV}
S[\!\Vf]:=\langle \Vf,\boldsymbol{M}^\nabla \!\Vf\,\rangle\, ,
\end{equation}
where $\Vf\in \Wedge^1(\odot^{k} {\mathcal T}^*)$,
is gauge invariant for any spin $s=k+1$.
Here the pairing $\langle {\!\bm U\!\!},\!\Vf\rangle $ denotes
$$
\langle {\!\bm U\!\!},\!\Vf\rangle:=
\int_M g^{ab} \, U_{aA_1\ldots A_k} 
V_{bB_1\ldots B_k} \, 
H^{A_1B_1}\cdots H^{A_kB_k} 
\, .$$ 

We now  focus (partly for simplicity, but also because this case is special) on spin two. 
In this case, calling 
$$
V_{aA}:=\begin{pmatrix}
v_a\\ \psi_{ab}
\end{pmatrix}\ , 
$$
the gauge invariance of~\eqref{SV}
reads
\begin{equation}\label{gauge2}
\left\{
\begin{array}{rcl}
\delta v_a&=&\nabla_a\varepsilon -\xi_a\\[2mm]
\delta \psi_{ab}&=&\nabla_a\xi_b+Q_{ab}\varepsilon\, ,
\end{array}\right.
\end{equation}
where $\begin{pmatrix}\xi_a & \varepsilon\end{pmatrix}\in \cT_A$.
Thus, since the one-form $v_a$ enjoys an algebraic gauge invariance, it can be gauged away. However, 
there is still  the freedom to further gauge  transform $\psi$ along the locus $\xi_a=\nabla_a \varepsilon$, yielding
\begin{equation*}
\delta \psi_{ab}=\big(\nabla_a \nabla_b + Q_{ab}\big) \varepsilon\, .
\end{equation*}
Here only the symmetric part of $\psi_{ab}$ transforms (while its antisymmetric piece $\psi_{[ab]}$ is gauge inert). Therefore,  in the action~\eqref{SV}, we may set
\begin{equation*}
\psi_{ab}=\varphi_{ab}\, ,\quad v_a=0\, ,\end{equation*}
where $\varphi_{ab}$ is symmetric, and so obtain a functional $S[{\mathcal L}_1(\varphi)]$ invariant under the gauge symmetry $\delta \varphi = {\mathcal D} \varepsilon$, which is, of course, the PM gauge symmetry~\ref{1BGGspin2}.

We must still account for the gauge invariant constraints. These are encoded by the Bianchi identity corresponding to the  gauge symmetry~\eqref{gauge2} of the original action~\eqref{SV}, $$\deltaf^\nabla \boldsymbol{M}^\nabla \Vf=0\, .$$
Denoting the two equations of motion of the action~\eqref{SV}
$$
\boldsymbol{M}^\nabla V_a:=\begin{pmatrix}
{\mathcal G}_a^v\\[1mm]
{\mathcal G}^\psi_{ab}
\end{pmatrix}\, ,
$$
the above Bianchi identity implies
\begin{equation}\label{not Ricci}
\nabla^a{\mathcal G}^\psi_{ab}+{\mathcal G}_b^v=0\, .
\end{equation}
Now observe that the equation of motion of the PM gauge invariant action $S[{\mathcal L}_1(\varphi)]$
is related to that of $S[V]$ by
$$
{\mathcal G}^\varphi_{ab}={\mathcal G}^\psi_{(ab)}\raisebox{-.7mm}{\big|}{}_{{}_{\scriptstyle \psi=\varphi,v=0}}\, .
$$
Applying the  Bianchi identity~\eqref{not Ricci} to the field configuration $V={\mathcal L}_1 (\varphi)$, we thus learn
\begin{equation*}
\nabla^a{\mathcal G}^\varphi_{ab}+\Big(\nabla^a{\mathcal G}^\psi_{[ab]}+{\mathcal G}_b^v\Big)\raisebox{-.7mm}{\big|}{}_{{}_{\scriptstyle \psi=\varphi,v=0}}=0\, .
\end{equation*}
The second and third terms above are an integrability condition of the PM equation of motion and thus ought yield the divergence constraint. Indeed, the last of these is $$
{\mathcal G}_a^v\raisebox{-.7mm}{\big|}{}_{{}_{\scriptstyle \psi=\varphi,v=0}}=
X^A \big[\boldsymbol{M}^\nabla {\mathcal L}_1(\varphi)\big]_{aA}=-\nabla.\varphi_a + \nabla_a \bar\varphi\, ,
$$
which is precisely the divergence constraint.
However, we still need to remove the term~$\nabla^a{\mathcal G}^\psi_{[ab]}$. It is 
proportional to the divergence of the variation of the square of the  divergence constraint
so can be removed by adding such a  term to the action. This yields the PM  spin two action principle 
$$
S[\varphi]=\langle {\mathcal L}_1 (\varphi), \boldsymbol{M}^\nabla {\mathcal L}_1( \varphi)\rangle -\frac12
\langle \iota(X) \boldsymbol{M}^\nabla {\mathcal L}_1 (\varphi),\iota(X) \boldsymbol{M}^\nabla {\mathcal L}_1 (\varphi)\rangle\, .
$$
Note that $F_{abc}:=\nabla_a \varphi_{bc}-\nabla_b \varphi_{ac}$ is a gauge invariant curvature for the PM field $\varphi$ (the PM action in terms of this curvature first appeared in~\cite{DeserEM}). The top slot of the image of~$\boldsymbol{M}^\nabla {\mathcal L}_1$, denoted above by $\iota(X) \boldsymbol{M}^\nabla {\mathcal L}_1 (\varphi)$, is the both the trace of this curvature and the divergence constraint. Finally, we note, that by construction, the variation of the above action yields the PM equation of motion~\eqref{PM2eom}.

\section*{Acknowledgements}
The authors gratefully acknowledge support from the Royal Society of
New Zealand via Marsden Grant 13-UOA-018 and the UCMEXUS-CONACYT grant
CN-12-564.  E.L. and A.W. thank the University of Auckland for
hospitality.  E.L. acknowledges partial support from SNF Grant
No. 200020-149150/1.  A.W. was also supported by a Simons
Foundation Collaboration Grant for Mathematicians ID 317562.

\appendix

\section{Log-radial reduction}\label{LRR}

Actions and gauge operators for PM models with arbitrary spin were
originally computed using the log-radial reduction
method~\cite{Biswas} in~\cite{Hallowell}. This technology is closely
related to projective geometry. Here we sketch this link and extend
the log-radial reduction technique to Ricci flat metric cone spaces,
and therefore Einstein backgrounds.  Note that, in a general
projective setting, this ambient space is known as the Thomas cone
space, in view of~\cite{T}, and the link between this, (the metric
cone~\cite{Gallot}) and the projective tractor connection and calculus  treated
in~\cite{ageom,GPW}.

We wish to consider a Ricci flat, $(n+1)$-dimensional ``ambient'' manifold  $(\tilde M,ds^2)$. For that we make the  metric
ansatz
$$
ds^2=e^{2u}\big(du^2+d\Omega^2)\, .
$$
Here we focus on Riemannian signature (but other signatures can be treated similarly). 
If  the $n$-dimensional metric $d\Omega^2$ is $u$-independent, this metric ansatz implies that the vector field $\xi:=\frac{\partial}{\partial u}$ is a homothety, namely
$$
\pounds_\xi ds^2  = 2 ds^2\, . 
$$
Furthermore,
the diffeomorphism $r=e^u$ shows that~$ds^2$ is a cone over the $n$-dimensional, constant $u$, manifold~$M$ with metric $d\Omega^2$. (Note that for the ambient metric $\tilde g=ds^2$, we have $\nabla^{\tilde g}_A \xi^B=\delta_A^B$; see~\cite{Gibbons} for the
proof that this condition implies $\tilde g $ is a cone metric.)
Thus $u$ is a logarithm of the ``radius'' $r$, hence the terminology ``log-radial reduction''. Further,  requiring that the metric $d\Omega^2$ is Einstein with scalar curvature~${\sf Sc}=n(n-1)$ makes $ds^2$ Ricci flat. For the metric $d\Omega^2$, we have $\Lambda=\frac{(n-1)(n-2)}{2}$. Choosing a frame
$$
E^A=e^u\begin{pmatrix}
{\!\bm e\!\!}^a\\
du
\end{pmatrix}\, ,
$$
where $d\Omega^2 = {\!\bm e\!\!}^a \odot {\! \!\bm e\!\!}_a$, the Levi-Civita connection of $ds^2$, pulled back to $M$ acts as
$$
{\!\bm \nabla\!\!}^*\begin{pmatrix}
v^a \\[1mm] \rho
\end{pmatrix}
=
\begin{pmatrix}
{\!\bm \nabla\!\!} \rho\ -{\!\bm v\!\!}\\[1mm]
{\!\bm \nabla\!\!} v^a + {\!\bm e\!\!}^a \rho
\end{pmatrix}={\!\bm \nabla\!\!}^{\cT} \begin{pmatrix}
v^a \\ \rho
\end{pmatrix} \, .
$$
Here we employ $\boldsymbol{e}^a$ as the soldering form and recognize the above as exactly the tractor connection~\eqref{gradients}  for  Einstein projective structures  whose Schouten tensor obeys~$Q_{ab}=g_{ab}$.
Specializing to homogeneous sections  $V^A\in T\tilde M$, with homogeneity condition
$$
\frac{\partial}{\partial u} V^A = w V^A\, ,
$$ 
we further see that $\nabla_A V^B$ evaluated along a constant $u$ hypersurface~$M$, yields precisely  the Thomas $D$-operator acting on $V^B$, so  $D_AV^B$, as given in~\eqref{ThomasD}. 

Massless higher spins in flat space were first written down in~\cite{Fronsdal,Curtright}.
They can be described by a totally symmetric tensor $\Phi_{A_1\ldots A_s}$ subject to the gauge symmetry
\begin{equation}\label{higher_spin_gauge}
\Phi_{A_1\ldots A_s}\sim 
\Phi_{A_1\ldots A_s}+
\nabla_{(A_1}\Xi_{A_2\cdots A_s)}\, ,
\end{equation}
where the fields and gauge parameters obey trace conditions 
$$\Phi_A{}^A{}_B{}^B{}_{A_5\ldots A_s}=0=\Xi_A{}^A{}_{A_3\ldots A_{s-1}}\, .$$
Specializing to distinguished homogeneities, the gauge transformation~\eqref{higher_spin_gauge} is known to encode the PM gauge operators~\cite{Hallowell}. Let us explicate this for the case of spin $s=2$. We now identify $\Phi_{AB}$ with a section of $\ct_{(AB)}(w)$ of homogeneity $w$ and $\nabla_A$ with the Thomas $D$-operator. 
The gauge parameter is thus a section of $\ct_{A}(w+1)$ and the gauge 
transformation becomes (denoting $\Xi_A:=\begin{pmatrix}\xi_a & \varepsilon\end{pmatrix}$)
$$
D_{(A} \Xi_{B)}
=
\begin{pmatrix}
\nabla_a \xi_b + g_{ab} \varepsilon & \nabla_a \varepsilon + w \xi_a\\[2mm]
\nabla_b \varepsilon + w \xi_b& (w+1)\varepsilon
\end{pmatrix}\, .
$$
The PM model appears at homogeneity $w=-1$. Denoting
$$
\cT_{(AB)}(-1)\ni \Phi_{AB}=:\begin{pmatrix}
\tau\\[1mm]
v_a\\[1mm]
\varphi_{ab}
\end{pmatrix}\, ,
$$
we have the PM gauge transformations
\begin{equation}
\left\{
\begin{array}{l}
\delta \tau=0\\[1mm]
\delta v_a=\nabla_a \varepsilon - \xi_a\\[1mm]
\delta \varphi_{ab}=\nabla_a \xi_b + g_{ab}\varepsilon\, .
\end{array}
\right.
\end{equation}
Because the field $\tau$ is gauge inert, it can be consistently 
set to zero.
Moreover, in our Einstein projective setting $Q_{ab}=g_{ab}$, so we see that the above transformation exactly matches those produced by the BGG detour machine  in~\eqref{gauge2}.

\end{document}